\DeclareMathAlphabet{\mathpzc}{OT1}{pzc}{m}{it}
\definecolor{dgreyblue}{rgb}{0.26,0.3,0.46}             
\newcommand{\bee}{{\mathbf{e}}}
\newcommand{\cA}{\mathcal{A}}
\newcommand{\cI}{\mathcal{I}}
\newcommand{\cD}{\mathcal{D}}
\renewcommand{\text}[1]{\hbox{\rm \ #1\ \/}}
\newcommand{\be}[1]{\begin{equation}\label{#1}}
\newcommand{\ee}{\end{equation}}
\newcommand{\beqn}{\begin{eqnarray*}}
\newcommand{\eeqn}{\end{eqnarray*}}
\newcommand{\beq}{\begin{eqnarray}}
\newcommand{\eeq}{\end{eqnarray}}
\newcommand{\ben}{\begin{enumerate}}
\newcommand{\een}{\end{enumerate}}
\newcommand{\bi}{\begin{itemize}}
\newcommand{\ei}{\end{itemize}}
\newcommand{\eps}{\varepsilon}
\newcommand{\cU}{\mathcal{U}}
\newcommand{\vecd}{\mathbf{d}}
\newcommand{\IE}{{\em i.e.}\xspace}
\newcommand{\tx}{^{\rm th}}
\newcommand{\refine}{ \prec_{\mathpzc{r}} }
\newtheorem{problem}{Problem}
\newtheorem{theorem}{Theorem}
\newtheorem{remark}{Remark}
\newtheorem{lemma}[theorem]{Lemma}
\newtheorem{corollary}[theorem]{Corollary}
\newenvironment{proof}{{\noindent\bf Proof.\ }}{\hfill{\Pisymbol{pzd}{113}}\vspace{0.1in}}
\newenvironment{proof-sketch}{{\noindent\bf Sketch of Proof.\ }}{\hfill{\Pisymbol{pzd}{113}}\vspace{0.1in}}
\newcommand{\etal}{\emph{et~al.}}
\newcommand{\NP}{\mathsf{NP}}
\newcommand{\nbr}{\mathsf{Nbr}}
\newcommand{\TB}{\vspace{-0.1ex}}\newcommand{\TiE}{\setlength{\itemsep}{-1ex}}
\newcommand{\comment}[1]{}
\newtheorem{proposition}{Proposition}
\newcommand{\EG}{{\it e.g.}\xspace}
\newcommand{\FI}[1]{Fig.~\ref{#1}\xspace}
\newcommand{\iif}{{\bf{if}}}
\newcommand{\tthen}{{\bf{then}}}
\newcommand{\eelse}{{\bf{else}}}
\newcommand{\ffor}{{\bf{for}}}
\newcommand{\wwhile}{{\bf{while}}}
\newcommand{\rreturn}{{\bf{return}}}
\newcommand{\rrepeat}{{\bf{repeat}}}
\newcommand{\ddo}{{\bf{do}}}
\newcommand{\xtc}{{\sc X3c}}
\newcommand{\bSC}{{\sc Sc}}
\newcommand{\sat}{{\sc Sat}}
\newcommand{\diam}{\mathsf{diam}}
\newcommand{\dist}{\mathrm{dist}}
\newcommand{\kopt}{ {k}_{\mathrm{opt}} }
\newcommand{\lopt}{ \mathpzc{L}_{\mathrm{opt}} }
\newcommand{\loptgt}{ \mathpzc{L}_{\mathrm{opt}}^{\geq k} }
\newcommand{\lopteq}{ \mathpzc{L}_{\mathrm{opt}}^{=k} }
\newcommand{\lopteqone}{ \mathpzc{L}_{\mathrm{opt}}^{=1} }
\newcommand{\hatlopteqone}{ \widehat{ \mathpzc{L}_{\mathrm{opt}}^{=1} } }
\newcommand{\hatloptgt}{ \widehat{ \mathpzc{L}_{\mathrm{opt}}^{\geq k} } }
\newcommand{\vopt}{ V_{\mathrm{opt}} }
\newcommand{\voptgt}{ V_{\mathrm{opt}}^{\geq k} }
\newcommand{\vopteq}{ V_{\mathrm{opt}}^{=k} }
\newcommand{\vopteqone}{ V_{\mathrm{opt}}^{=1} }
\newcommand{\vhat}{\widehat{V}}
\newcommand{\hatvopteqone}{ \widehat{ V_{\mathrm{opt}}^{=1} } }
\newcommand{\hatvoptgt}{ \widehat{ V_{\mathrm{opt}}^{\geq k} } }
\newcommand{\hatkopt}{ \widehat{ k_{\mathrm{opt}} } }
\newcommand{\scopt}{ \mathsf{opt}_{\!\!\text{\footnotesize\sc Sc}\!\!}}
\newcommand{\xtcopt}{ \mathsf{opt}_{\!\!\!\text{\footnotesize\sc X3c}\!\!}}
\newcommand{\mds}{{\sc Mds}}
\newcommand{\nokmad}{{\sc Adim}}
\newcommand{\mad}{{\sc Adim}$_{\geq k}$}
\newcommand{\eqmad}{{\sc Adim}$_{= k}$}
\newcommand{\eqmadone}{{\sc Adim}$_{= 1}$}
\newcommand{\eqonemad}{{\sc Adim}$_{=1}$}
\newcommand{\eqdef}{\stackrel{\mathrm{def}}{=}}
\definecolor{columbiablue}{rgb}{0.61, 0.87, 1.0}
\DeclareMathOperator{\adim}{adim}
\title{On the Computational Complexities of Three Privacy Measures for Large Networks Under Active Attack}
\author{
Tanima Chatterjee\thanks{Research partially supported by NSF grant IIS-1160995.}
\hspace*{0.2in}
Bhaskar DasGupta$^\ast$
\hspace*{0.2in}
Nasim Mobasheri$^\ast$
\hspace*{0.2in}
Venkatkumar Srinivasan$^\ast$
\\
Department of Computer Science 
\\
University of Illinois at Chicago 
\\
Chicago, IL 60607, USA
\\
\texttt{\{tchatt2,bdasgup,nmobas2,vsrini7\}@uic.edu}
\and 
Ismael G. Yero\thanks{This research was done while the author was visiting the University of Illinois at Chicago, USA, supported by 
``Ministerio de Educaci\'on, Cultura y Deporte'', Spain, under the 
``Jos\'e Castillejo'' program for young researchers (reference number: CAS15/00007).}
\\
Departamento de Matem\'{a}ticas 
\\
Escuela Polit\'{e}cnica Superior 
\\
Universidad de C\'{a}diz
\\
11202 Algeciras, Spain
\\
\texttt{ismael.gonzalez@uca.es}
}
\begin{document}

\maketitle

\begin{abstract}
With the arrival of modern internet era, large public networks of various types have come 
to existence to benefit the society as a whole and several research areas such as sociology, 
economics and geography in particular. However, the societal and research benefits of these 
networks have also given rise to potentially significant privacy issues in the sense that 
malicious entities may violate the privacy of the users of such a network by analyzing the 
network and deliberately using such privacy violations for deleterious purposes. Such considerations 
have given rise to a new active research area that deals with the quantification of privacy of 
users in large networks and the corresponding investigation of computational complexity issues 
of computing such quantified privacy measures. In this paper, we formalize three such privacy 
measures for large networks and provide non-trivial theoretical computational complexity results 
for computing these measures. Our results show the first two measures can be computed efficiently, 
whereas the third measure is provably hard to compute within a logarithmic approximation factor. 
Furthermore, we also provide computational complexity results for the case when the privacy requirement 
of the network is severely restricted, including an efficient logarithmic approximation.
\end{abstract}

\section{Introduction}

Social networks have certainly become an important center of attention in our modern information society 
by transforming human relationships into a huge interchange of, very often, \emph{sensitive} data.
There are many truly beneficial consequences 
when social network data are released for justified mining and analytical purposes.
For example, researchers in sociology, economics and geography, 
as well as vendors in service-oriented systems and 
internet advertisers can certainly benefit and improve their performances 
by a fair study of the social network data.
But, such benefits are definitely \emph{not} free of cost as 
dishonest individuals or organizations may compromise the \emph{privacy} of its users 
while scrutinizing a public social network and 
may deliberately use such privacy violations 
for harmful or other unfair commercial purposes.
A common way to handle this kind of unwelcome intrusion on the user's privacy 
is to somehow \emph{anonymize} the data by removing most potentially identifying attributes. 
However, even after such anonymization, 
often it may still be possible to infer
many sensitive attributes of a social network that may be linked to its users, 
such as node degrees, inter-node distances or network connectivity, and therefore 
\emph{further} privacy-preserving methods need to be investigated and analyzed.
These additional privacy-preserving methods of social networks are based on the concept of 
$k$-anonymity introduced for microdata in~\cite{Samarati}, 
aiming to ensure that \emph{no} record in a database can be re-identified with a probability higher than $\nicefrac{1}{k}$.

Crucial to modelling a social network anonymization process
are of course the adversary's background knowledge of any object and the structural information about the network that is available. 
For example, assuming the involved social network as a simple graph in which individuals are represented by nodes 
and relationships between pairs of individuals are represented by edges, the adversary's background knowledge 
about a target (a node) could be the node degree~\cite{Liu2008}, the node neighborhood~\cite{Zhou2008}, \emph{etc}. 
In such scenarios, it frequently suffices to develop attacks to re-identify the individuals and their relationships. 
Such attacks are usually called \emph{passive} (see~\cite{5207644} for more information). 
Some examples of passive attacks and the corresponding privacy-preserving methods for social
networks can be found in references~\cite{Liu2008,Zhou2008,Zhou2009}.

In contrast, Backstrom \etal\ introduced the concept of the 
so-called \emph{active} attacks in~\cite{Backstrom}. Such attacks are mainly based on creating and inserting 
in a network some nodes (the ``attacker nodes'') under control by the adversary. 
These attacker nodes could be newly created accounts with pseudonymous or spoofed identities 
(commonly called Sybil nodes), or existing legitimate individuals in the network which are in the adversary's proximity. 
The goal is then to establish links with some other nodes in the network (or even links between other nodes) 
in order to create some sort of ``fingerprints'' in the network that will be further released. 
Clearly, once the releasing action has been achieved, the adversary could retrieve the fingerprints already introduced, 
and use them to re-identify other nodes in the network. 
Backstrom \etal\ in~\cite{Backstrom} showed that 
$O(\sqrt{\log n})$ attacker nodes in a network could in fact seriously compromise the privacy of any arbitrary node.
In recent years, several research works have appeared that deal with decreasing the impact of these 
active attacks (see, for instance, \cite{Viswanath2012}). 
For other related publications on 
privacy-preserving methods in social networks, see~\cite{Netter2011,Wu2010,Zhou2008}.

There are already many well-known active attack strategies for social networks
in order to find all possible vulnerabilities.
However, somewhat surprisingly, 
not many prior research works have addressed the goal of measuring how resistant is a given social network against 
these kinds of active attacks to the privacy. 
To this effect, very recently a novel privacy measure for social networks was introduced in~\cite{Trujillo-1}. 
The privacy measure proposed there was called the 
\emph{$(k,\ell)$-anonymity},
where $k$ is a number indicating a privacy threshold and $\ell$ is the \emph{maximum} number of attacker nodes 
that can be inserted into the network; $\ell$ may be estimated through some statistical methods\footnote{Note that 
other different privacy notions with the \emph{same} name also exists, \EG, 
Feder and Nabar in~\cite{Feder} 
investigated $(k,\ell)$-anonymity where $\ell$ represented the number of common neighbors of two nodes.}.
Trujillo-Rasua and Yero in~\cite{Trujillo-1} showed 
that graphs satisfying $(k,\ell)$-anonymity can prevent adversaries who control at most $\ell$ nodes in the network 
from re-identifying individuals with probability higher than $\nicefrac{1}{k}$. 
This privacy measure relies on a graph parameter called the $k$-metric anti-dimension.

Consider a simple connected unweighted graph $G=(V,E)$ and 
let $\dist_{u, v}$ be the length (number of edges) 
of a shortest path between two nodes $u,v\in V$. 
For an ordered sequence $S =u_1,\dots, u_{t}$ of nodes of $G$ and a node 
$v\in V$, the vector $\vecd_{v,-S}=\left(\dist_{v, u_1}, \dots, \dist_{v, u_{t}}\right)$ is called 
the \emph{metric representation} of $v$ with respect to $S$. Based on the above definition, 
a set $S\subset V$ of nodes is called a $k$-\emph{anti-resolving set} for $G$ 
if $k$ is the largest positive integer such that for every node $v\in V\setminus S$ 
there exist at least $k-1$ different nodes $v_1,\dots,v_{k-1} \in V\setminus S$ such that
$\vecd_{v,-S} = \vecd_{v_1,-S} = \dots = \vecd_{v_{k-1},-S}$, \IE, 
$v$ and $v_1, \dots, v_{k-1}$ have the same metric representation with respect to $S$.
The $k$-\emph{metric anti-dimension} of $G$, denoted by $\adim_k(G)$, is then the minimum cardinality of any 
$k$-anti-resolving set in $G$. Note that $k$-anti-resolving sets may \emph{not} exist in a graph for every $k$.

The connection between $(k, \ell)$-anonymity privacy measure and 
the $k$-metric anti-dimension can be understood in the following way. 
Suppose that an adversary takes control of a set of nodes $S$ of the graph (\IE, $S$ plays the role of attacker nodes), 
and the background knowledge of such an adversary regarding a target node $v$ 
is the metric representation of the node $v$ with respect to $S$. 
The $(k,\ell)$-anonymity privacy measure is then a privacy metric that naturally evolves from the 
adversary's background knowledge. 
Intuitively, if $S$ (the attacker nodes of an adversary) is a $k$-anti-resolving set then 
the adversary cannot uniquely re-identify other nodes in the network (based on the metric representation) 
from these attacker nodes 
with a probability higher than $\nicefrac{1}{k}$ (based on uniform sampling of other nodes), and 
if the $k$-metric anti-dimension of the graph is $\ell$
then the adversary must use at least $\ell$ attacker nodes to 
get the probability of privacy violation down 
to $\nicefrac{1}{k}$.

\subsection{Other Privacy Concepts and Measures}

There is a rich literature on theoretical investigations of privacy measures and 
privacy preserving computational models in several other application areas such as 
multi-party communications, distributed computing and game-theoretic 
settings (\EG, see~\cite{BCEO93,K92,Y79,FJS10,CDSS12}). However, none of these settings apply directly to
our application scenario of active attack model for social networks.
The differential privacy model, introduced by Dwork~\cite{dwork2006} in the context of privacy preservation 
in statistical databases against malicious database queries,
works by computing the correct answer to a query and adding a noise drawn from a specific distribution,
and is quite different from the anonymization
approach studied in this paper.

\subsection{Organization of the Paper}

It is obviously desirable to know how secure a given social network is against active attacks. 
This necessitates the study of computational complexity issues for computing 
$(k,\ell)$-anonymity. 
Currently known results only include some heuristic algorithms with no provable guarantee on performances such as in~\cite{Trujillo-1},
or algorithms for very special cases. In fact, it is not even known if any version of the related computational problems is 
$\NP$-hard. To this effect, we formalize three computational problems related to measuring the 
$(k,\ell)$-anonymity of graphs and 
present non-trivial computational complexity results for these problems.
The rest of the paper is organized as follows:

\begin{enumerate}[label=$\triangleright$]
\item
In Section~\ref{sec-termi}
we review some basic terminologies and notations and then present the three computational problems that we
consider in this paper. For the benefit of the reader, we also briefly review some standard algorithmic complexity
concepts and results that will be used later.
\item
In Section~\ref{sec-statementofresults}, we state the results in this paper mathematically precisely along with some
informal remarks. We group our results based on the problem definitions and the expected size of the attacker nodes.
\item
Sections~\ref{sec-pfb}--\ref{sec-pfe} are devoted to the proofs of the results stated in
Section~\ref{sec-statementofresults}.
\item
We finally conclude 
in Section~\ref{sec-conc}
with some possible future research directions.
\end{enumerate}

\section{Basic Terminologies, Notations and Problem Definitions}
\label{sec-termi}

In this section, we first describe the terminologies and notations required to describe our computational
problems, and subsequently describe several versions of the problems we consider.

\subsection{Basic Terminologies and Notations}
\label{sec-termi-2}

\begin{figure}[ht]
\centering
\includegraphics[width=\linewidth]{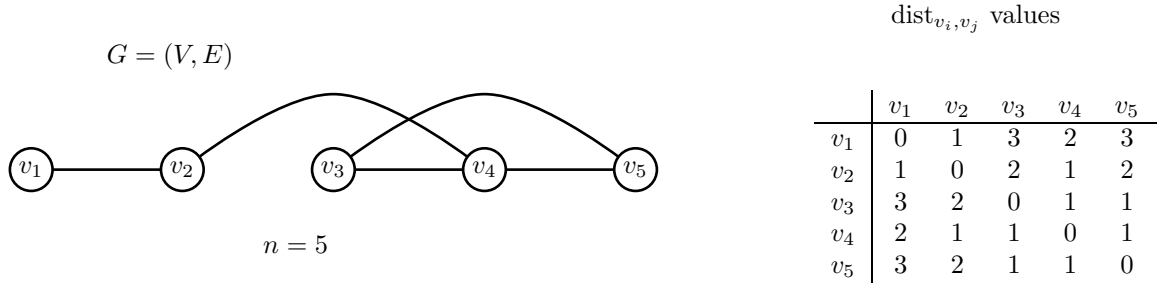}
\caption{\label{fignew}An example to illustrate the notations in Section~\ref{sec-termi-2}.}
\end{figure}

Let $G=(V,E)$ be our \emph{undirected unweighted} input graph over $n$ nodes $v_1,v_2,\dots,v_n$. 
We use $\dist_{v_i,v_j}$ to denote the distance (number of edges in a shortest path) between nodes $v_i$ and $v_j$.
For illustrating various notations, we use the example in \FI{fignew}.
\begin{enumerate}[label=$\blacktriangleright$,leftmargin=*]
%
\item
$\vecd_{v_i}=\left(\dist_{v_i,v_1},\dist_{v_i,v_2},\dots,\dist_{v_i,v_n}\right)$.
For example, $\vecd_{v_2}=(1,0,2,1,2)$.
\item
$\diam(G)=\max\limits_{v_i,v_j\in V} \left\{ \, \dist_{v_i,v_j} \right\}$
is the \emph{diameter} (length of a longest shortest path) of the graph $G=(V,E)$.
For example, $\diam(G)=3$.
\item
$\nbr\left(v_\ell\right)=
\left\{ \, v_j \,|\, \left\{v_\ell,v_j\right\} \in E \, \right\}$
is the (open) \emph{neighborhood} of node $v_\ell$ in $G=(V,E)$.
For example, 
$\nbr\left(v_2\right)= \left\{ v_1,v_4 \right\}
$.
\item
For a subset of nodes $V'\subset V$ and any $v_i\in V\setminus V'$, 
$\vecd_{v_i,-V'}$ 
denotes the metric representation of $v_i$ with respect to $V'$, \IE, 
the vector of $|V'|$ elements obtained from 
$\vecd_{v_i}$ 
by deleting 
$\dist_{v_i,v_j}$ 
for every $v_j\in V\setminus V'$. For example,
$\vecd_{v_2,-\left\{v_1,v_3\right\}}=\left(1,2\right)$.
\item
$\cD_{V'',-V'}=\left\{ \, \vecd_{v_i,-V'} \,|\, v_i\in V'' \, \right\}$
for any $V'' \subseteq V\setminus V'$. 
For example, if  
$V''=\left\{v_2,v_4\right\}$ then 
$\cD_{V'',-\left\{v_1,v_3\right\}}=\big\{(1,2),(2,1) \big\}$.
\item
$\Pi=\left\{ V_1,V_2,\dots,V_k \right\}$ is a partition of $V'\subseteq V$ if and only if 
$\cup_{t=1}^k V_t=V'$ and $V_i\cap V_j=\emptyset$ for $i\neq j$.
\begin{enumerate}[label=$\triangleright$,leftmargin=*]
\item
Partition $\Pi'=\left\{ V_1,V_2,\dots,V_\ell \right\}$ is called a \emph{refinement}\footnote{Our definition is slightly
different from the standard definition of refinement since we have
$
\cup_{t=1}^\ell V_t
\subset
\cup_{t=1}^k V_t$.}
of partition $\Pi$, 
denoted by $\Pi'\refine\Pi$, provided
$
\cup_{t=1}^\ell V_t
\subset
\cup_{t=1}^k V_t
$
and $\Pi'$ can be obtained from $\Pi$ in the following manner:
\begin{enumerate}[label=$\triangleright$,leftmargin=*]
\item
For every node $v_i\in 
\left( \cup_{t=1}^k V_t \right)
\setminus
\left( \cup_{t=1}^\ell V_t \right)
$, 
remove $v_i$ from the set containing it in $\Pi$.
\item
\emph{Optionally}, for every set $V_\ell$ in $\Pi$, replace $V_\ell$ by a partition of $V_\ell$.
\item
Remove empty sets, if any.
\end{enumerate}
For example, 
if 
$\Pi= \Big\{ \left\{ v_1,v_2 \right\}, \left\{ v_3,v_4,v_5 \right\} \Big\}$
and 
$\Pi'= \Big\{ \left\{ v_1,v_2 \right\}, \left\{ v_3 \right\}, \left\{ v_4 \right\} \Big\}$
then 
$\Pi'\refine\Pi$.
\end{enumerate}
\item
The equality relation over 
a set of vectors, all of same length, 
obviously defines 
an \emph{equivalence relation}. The following notations are used for such an equivalence relation
over the set of vectors $\cD_{V\setminus V',-V'}$ for some $\emptyset\subset V'\subset V$. 
\begin{enumerate}[label=$\triangleright$,leftmargin=*]
\item
The set of equivalence classes, which forms a partition of $\cD_{V\setminus V',-V'}$, is denoted by 
$\Pi_{{V\setminus V',-V'}}^{=}$.
For example, 
\\
$\Pi_{\left\{v_1,v_2,v_3 \right\},-\left\{ v_4,v_5\right\}}^{=}
=
\Big\{ \,\big\{ (2,3) \big\}, \, \big\{ (1,2)\big\}, \, \big\{ (1,1)\big\} \, \Big\}$.
\item
Abusing terminologies slightly, two nodes $v_i,v_j\in V\setminus V'$ will be said to
belong to the \emph{same} equivalence class if 
$\vecd_{v_i,-V'}$ and 
$\vecd_{v_j,-V'}$ belong to the same equivalence class in 
$\Pi_{{V\setminus V',-V'}}^{=}$, and thus 
$\Pi_{{V\setminus V',-V'}}^{=}$ 
also defines a partition into equivalence classes of $V\setminus V'$.
For example, 
$\Pi_{\left\{v_1,v_2,v_3 \right\},-\left\{ v_4,v_5\right\}}^{=}$
will also denote
$\Big\{ \, \big\{ v_1 \big\}, \, \big\{ v_2 \big\},\, \big\{ v_3 \big\}\, \Big\}$.
\item
The \emph{measure} of the equivalence relation is defined as 
$
\displaystyle
\mu \left(\cD_{V\setminus V',-V'}\right) 
\eqdef
\min_{\mathcal{Y}\in\Pi_{{V\setminus V',-V'}}^{=}}
\Big\{
\,\left|
\,\mathcal{Y}\,
\,\right|
\Big\}
$.
Thus, 
if a set $S$ is a $k$-anti-resolving set then 
$\cD_{V\setminus S,-S}$ defines a partition into equivalence classes whose measure 
is \emph{exactly} $k$.
For example, 
$\mu\left(\cD_{ \left\{v_1,v_2,v_3 \right\},-\left\{ v_4,v_5\right\} }\right)=1$
and 
$\left\{ v_4,v_5\right\}$ is a 
$1$-anti-resolving set.
\end{enumerate}
\end{enumerate}

\subsection{Problem Definitions}

It is obviously desirable to know how secure a given social network is against active attacks. 
This necessitates the study of computational complexity issues for computing 
$(k,\ell)$-anonymity. 
To this effect, we formalize three computational problems related to measuring the 
$(k,\ell)$-anonymity of graphs
For all the problem versions, let 
$G=(V,E)$ be the (connected undirected unweighted) 
input graph representing the social network under study.

\begin{problem}[metric anti-dimension or \nokmad)]\label{prob1}
Given $G$, find a subset of nodes $V'$ that 
\emph{maximizes} $\mu \left(\cD_{V\setminus V',-V'} \right)$.
\end{problem}

\noindent
{\bf Notation related to Problem~\ref{prob1}}
$\kopt = 
\max\limits_{\emptyset \subset V'\subset V}  
\Big\{
\,
\mu \left(  \cD_{V\setminus V',-V'} \right)
\Big\}
$.

\smallskip

Problem~\ref{prob1} simply finds a $k$-anti-resolving set for the largest possible $k$.
Intuitively, it sets an absolute bound on the privacy violation probability of an adversary assuming 
that the adversary can use \emph{any} number of attacker nodes. In practice, however, the number of attacker
nodes employed by the adversary may be limited, which leads us to the second problem formulation stated below.

\begin{problem}[$k_\geq$-metric anti-dimension or \mad]\label{prob2}
Given $G$ and a positive integer $k$,
find a subset of nodes $V'$ of \emph{minimum} cardinality such that 
$\mu \left(  \cD_{V\setminus V',-V'} \right)\geq k$, if such a $V'$ exists.
\end{problem}

\noindent
{\bf Notation and assumption related to Problem~\ref{prob2}}
$\loptgt=\left| \voptgt \right| = 
\min 
\Big\{
\,|V'|\,\,\,\Big|\,\,\,
\mu \left(  \cD_{V\setminus V',-V'} \right) \geq k
\Big\}
$ for some $\emptyset\subset\voptgt\subset V$.
If $\mu \left(  \cD_{V\setminus V',-V'} \right) \geq k$ for \emph{no} $V'$ 
then we set $\loptgt=\infty$ and $\voptgt=\emptyset$. 

\smallskip

Problem~\ref{prob2} finds a $k$-anti-resolving set for largest $k$
while simultaneously minimizing the number of attacker nodes. 

The remaining third version of our problem formulation relates to a trade-off between 
privacy violation probability and the corresponding minimum number of attacker nodes needed to achieve such a violation.
To understand this motivation, suppose that $G$ has 
a $k$-metric anti-dimension of $\ell$,  
a $k'$-metric anti-dimension of $\ell'$, 
$k'>k$ and $\ell'<\ell$.
Then, this provides a trade-off between privacy and number of attacker nodes, namely we may
allow a smaller privacy 
violation probability $\nicefrac{1}{k'}$ but the network can tolerate \emph{adversarial control}
of a \emph{fewer} number $\ell'$ of nodes or 
we may
allow a larger privacy 
violation probability $\nicefrac{1}{k}$ but the network can tolerate adversarial 
control of a larger number $\ell$ of nodes.
Such a trade-off may be crucial for a network administrator in administering privacy of a network
or for an individual in its decision to join a network.
Clearly, this necessitates solving a problem of the following type.

\begin{problem}[$k_{=}$-metric antidimension or \eqmad]\label{prob3}
Given $G$ and a positive integer $k$,
find a subset of nodes $V'$ of \emph{minimum} cardinality such that 
$\mu \left(  \cD_{V\setminus V',-V'} \right)=k$, if such a $V'$ exists.
\end{problem}

\noindent
{\bf Notation and assumption related to Problem~\ref{prob3}}
$\lopteq=\left| \vopteq \right| = 
\min 
\Big\{
\,|V'|\,\,\,\Big|\,\,\,
\mu \left(  \cD_{V\setminus V',-V'} \right)=k
\Big\}
$ for some $\emptyset\subset\vopteq\subset V$.
If $\mu \left(  \cD_{V\setminus V',-V'} \right)=k$ for \emph{no} $V'$ 
then we set $\lopteq=\infty$ and $\vopteq=\emptyset$

\subsection{Standard Algorithmic Complexity Concepts and Results}

For the benefit of the reader, 
we summarize the following concepts and results from the computational complexity theory domain.
\emph{We assume that the reader is familiar with standard $O$, $\Omega$, $o$ and $\omega$ notations used
in asymptotic analysis of algorithms $($\EG, see~{\rm\cite{CLR90}}$)$}.

An algorithm $\cA$ for a minimization (resp., maximization) problem is said to have an \emph{approximation ratio} of $\eps$ 
(or is simply an \emph{$\eps$-approximation})~\cite{V01} provided $\cA$ runs in polynomial time in the size of its input and produces 
a solution with an objective value 
\emph{no larger than} $\eps$ times 
(resp., \emph{no smaller than} $\nicefrac{1}{\eps}$ times) 
the value of the optimum.
DTIME$\left(n^{\log\log n}\right)$ 
refers to the class of problem that can be solved by a deterministic algorithm
running in 
$(n^{\log\log n})$ time when $n$ is the size of the input instance; it is widely believed that 
$\NP\not\subset$DTIME$(n^{\log\log n})$.

The minimum set-cover problem (\bSC) is a well-known combinatorial problem that
is defined as follows~\cite{CLR90,GJ79}. Our input is an 
universe $\cU=\left\{a_1,a_2,\dots,a_n\right\}$
of $n$ elements, and a collection of $m$ sets 
$S_1,S_2,\ldots,S_m\subseteq \cU$
over this universe 
with $\cup_{j=1}^m S_j=U$.
A valid solution of \bSC\ 
is a subset of indices $\cI\subseteq \{1,2,\dots,m\}$ such
that every element in $\cU$ is ``covered'' by a set whose index is in $\cI$, \IE,
$\forall \, a_j\in \cU \,\, \exists\, i\in \cI  \;:\;  a_j\in S_i$.  
The objective of \bSC\ is to 
{\em minimize} the number $|\cI|$ of selected sets. 
We use the notation $\scopt$ to denote the size (number of sets) in an optimal solution of an instance of \bSC.
On the inapproximability side, \bSC\ is $\NP$-hard~\cite{GJ79} and,  
assuming $NP\not\subseteq\,$DTIME$\left(n^{\log\log n}\right)$, 
\bSC\ does not admit a 
$(1-\eps)\ln n$-approximation for any constant $0<\eps<1$~\cite{F98}.
On the algorithmic side, \bSC\ admits a $(1+\ln n)$-approximation using a simple 
greedy algorithm~\cite{J74}
that can be easily implemented 
to run in $O\big(\sum_{i=1}^m \left| S_i \right| \,\big)$ time~\cite{CLR90}.

\section{Our Results}
\label{sec-statementofresults}

In this section we provide precise statements of our results, leaving their proofs in 
Sections~\ref{sec-pfb}--\ref{sec-pfe}.

\subsection{Polynomial Time Solvability of \nokmad\ and \mad}

\begin{theorem}\label{thm1}~\\
\noindent
{\bf (a)} 
Both \nokmad\ and \mad\ 
can be solved in $O\left(n^4\right)$ time.

\smallskip
\noindent
{\bf (b)} 
Both \nokmad\ and \mad\ 
can also be solved in $O\left(\frac{n^4 \,\log n}{k}\right)$ time ``with high 
probability'' $($\IE, 
with a probability of at least $1-n^{-c}$ for some constant $c>0$$)$.
\end{theorem}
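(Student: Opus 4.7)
The plan is to give a seed-driven iterative-refinement algorithm, prove its correctness via a structural invariant, and then replace the deterministic seed enumeration by random sampling to obtain part (b). First, I would preprocess $G$ by computing all pairwise distances in $O(n^3)$ time via BFS from each vertex; this lets us build the partition $\Pi^{=}_{V\setminus V',-V'}$ and evaluate $\mu\!\left(\cD_{V\setminus V',-V'}\right)$ in $O(n|V'|)\le O(n^2)$ time for any $V'$, by hashing the restricted distance vectors $\vecd_{v,-V'}$. The core subroutine, parameterized by a seed $w\in V$, is: initialize $V'=\{w\}$, then repeatedly absorb into $V'$ every minimum-sized equivalence class of the current partition, recording the pair $(|V'|,\mu)$ at each step, until $V'=V$. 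Each absorption strictly grows $V'$, so the subroutine performs at most $n-1$ iterations of $O(n^2)$ each, giving $O(n^3)$ per seed; running the subroutine for all $n$ seeds therefore costs $O(n^4)$, and the answers are extracted from the recorded sequences: the maximum $\mu$ observed across all seeds yields $\kopt$, and the smallest $|V'|$ observed with $\mu\ge k$ yields $\loptgt$.

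The crux of the correctness proof is the following structural invariant: let $V^{*}$ denote any optimal attacker set and write $k^{*}=\mu\!\left(\cD_{V\setminus V^{*},-V^{*}}\right)$; then if the subroutine is seeded with any $w\in V^{*}$, the iterate $V'$ remains a subset of $V^{*}$ at every step until the step at which $V'=V^{*}$. I plan to prove this by induction on the iteration count. Assuming $V'\subsetneq V^{*}$, consider a class $C\in\Pi^{=}_{V\setminus V',-V'}$ with $|C|<k^{*}$ and decompose $C=C_1\cup C_2$ with $C_1\subseteq V^{*}\setminus V'$ and $C_2\subseteq V\setminus V^{*}$. Because $V'\subseteq V^{*}$, the partition $\Pi^{=}_{V\setminus V^{*},-V^{*}}$ refines the restriction of $\Pi^{=}_{V\setminus V',-V'}$ to $V\setminus V^{*}$, so $C_2$ is a disjoint union of full equivalence classes of $\Pi^{=}_{V\setminus V^{*},-V^{*}}$; since each such class has size at least $k^{*}$, the assumption $|C|<k^{*}$ forces $C_2=\emptyset$ and hence $C\subseteq V^{*}\setminus V'$. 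Thus the absorption step preserves the invariant, and because $V'$ strictly grows the iterate eventually reaches exactly $V^{*}$ and records the pair $(|V^{*}|,k^{*})$; minimality of $|V^{*}|$ for \mad\ and maximality of $k^{*}$ for \nokmad\ then yield correctness.

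For part (b) I would replace the enumeration of all $n$ seeds by uniform sampling of $s=\Theta((n/k)\log n)$ seeds; a standard union bound shows that at least one sampled seed belongs to some optimal $V^{*}$ with probability at least $1-n^{-c}$, which by the invariant of part (a) ensures that the algorithm outputs the correct answer. Multiplying $s$ by the $O(n^3)$ per-seed cost gives the claimed $O(n^4\log n/k)$ running time. The most delicate point, which I anticipate will require the most care, is justifying that the fraction of good seeds is at least $\Omega(k/n)$: in the natural regime $|V^{*}|\ge k$ this is immediate, while in the remaining cases one may argue that the deterministic $O(n^4)$ bound already matches or beats the randomized target, or refine the sampling strategy (for instance, sampling seed sets rather than individual vertices) so that the dependence becomes $1/k$ rather than $1/|V^{*}|$.
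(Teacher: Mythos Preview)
Your approach for part (a) is essentially the paper's: the same seed-and-absorb algorithm, with your refinement observation and invariant corresponding to the paper's Corollary~\ref{cor1} and the case analysis in its correctness lemma. One small gap to patch: your inductive step only treats minimum classes $C$ with $|C|<k^{*}$, so you should add the remark that if at some stage $\mu\!\left(\cD_{V\setminus V',-V'}\right)\ge k^{*}$ while $V'\subsetneq V^{*}$, then for \mad\ this contradicts the minimality of $|V^{*}|$, while for \nokmad\ the pair just recorded is already optimal. The paper handles exactly this via its Case~2.1/2.2 split; once you add that sentence your invariant argument is in fact a bit cleaner than the paper's case analysis.

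For part (b) your randomization is again identical to the paper's. You are right to flag the ``delicate point'': the paper's own probability calculation silently uses $\Pr[w\in\voptgt]\ge k/n$, i.e.\ $|\voptgt|\ge k$, and offers no justification for this---indeed it can fail (the center of a star $K_{1,n-1}$ gives $\loptgt=1$ for every $k\le n-1$). Your proposed hedges (falling back on the deterministic bound, or sampling seed sets) do not actually close the gap either, so on this point you have matched the paper's argument while correctly isolating a step that the paper leaves unjustified.
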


\begin{remark}
The randomized algorithm in Theorem~{\em\ref{thm1}(b)} runs faster that the deterministic algorithm in Theorem~{\em\ref{thm1}(a)} 
provided $k=\omega(\log n)$.
\end{remark}

\subsection{Computational Complexity of \eqmad}

\subsubsection{The Case of Arbitrary $k$}

\begin{theorem}\label{k-large}~\\
\noindent
{\bf (a)} 
\eqmad\ is $\NP$-complete for any integer 
$k$ in the range 
$1\leq k \leq n^\eps$ where 
$0\leq \eps<\frac{1}{2}$ is any arbitrary constant, 
even if the diameter of the input graph is $2$.

\medskip
\noindent
{\bf (b)} 
Assuming $\NP\not\subseteq\,\,$\emph{DTIME}$\,(n^{\log\log n})$, 
there exists a universal constant $\delta>0$ such that 
\eqmad\ does not admit a $\left(\frac{1}{\delta}\ln n\right)$-approximation 
for any integer 
$k$ in the range 
$1\leq k \leq n^\eps$ where 
$0\leq \eps<\frac{1}{2}$ is any arbitrary constant, 
even if the diameter of the input graph is $2$.

\medskip
\noindent
{\bf (c)} 
If $k=n-c$ for some constant $c$
then $\lopteq=c$ if a solution exists and 
\eqmad\ can be solved in polynomial time. 
\end{theorem}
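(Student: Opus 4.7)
The plan is to prove the three parts via different techniques: part~(c) through a short combinatorial argument followed by brute-force enumeration, and parts~(a) and~(b) simultaneously through a single gap-preserving reduction from the minimum set-cover problem~\bSC.

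For part~(c), I would first establish the structural claim that $\lopteq = c$ whenever any valid $V'$ exists. Suppose $V' \subsetneq V$ satisfies $\mu(\cD_{V \setminus V', -V'}) = k = n - c$. Because every equivalence class has size at least $k$ and the class sizes sum to $|V \setminus V'| = n - |V'|$, we obtain $n - |V'| \geq k$, hence $|V'| \leq c$. Conversely, if $|V'| < c$, then either $V \setminus V'$ forms a single class, giving $\mu = n - |V'| > k$, or it splits into at least two classes of combined size $\geq 2k = 2(n - c)$, which exceeds $|V \setminus V'| \leq n$ whenever $n > 2c$; either way we reach a contradiction. Therefore $|V'| = c$ is forced for sufficiently large $n$, and the finitely many small-$n$ instances are dispatched by brute force. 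The algorithm precomputes all pairwise distances via $n$ BFS traversals and then enumerates all $\binom{n}{c} = O(n^c)$ subsets $V'$ of cardinality $c$, checking in $O(cn)$ time per subset whether the $n - c$ vertices of $V \setminus V'$ all share one metric representation with respect to $V'$, which is precisely the $\mu = k$ condition when $|V\setminus V'| = k$. The total running time is $O(n^{c + O(1)})$, polynomial for constant $c$.

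For parts~(a) and~(b), my plan is a single approximation-preserving reduction $(\cU, \{S_1, \ldots, S_m\}) \mapsto (G, k)$ from \bSC\ that simultaneously delivers NP-hardness and the $\frac{1}{\delta}\ln n$ inapproximability when composed with Feige's logarithmic hardness of \bSC\ under $\NP \not\subseteq \mathrm{DTIME}(n^{\log\log n})$. Given an instance with universe $\cU = \{a_1, \dots, a_\nu\}$ and $\bigcup_j S_j = \cU$, I build a graph $G$ whose vertex set consists of a set-vertex $s_j$ for each $S_j$, an element gadget of $k$ mutual twin replicas $a_i^{(1)}, \dots, a_i^{(k)}$ for each $a_i$, and a small number of auxiliary vertices including a spoke vertex (adjacent to nearly everything, so as to collapse all distances to at most $2$) and an anchor structure that seeds a specifically designated equivalence class. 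Incidences $a_i \in S_j$ are encoded by edges between every replica $a_i^{(t)}$ and $s_j$. The design is engineered so that (i) every set cover $\{s_j : j \in I\}$ induces a $V'$ of size $|I| + O(1)$ attaining $\mu = k$, and (ii) conversely any $V'$ with $\mu = k$ projects to a set cover of $\cU$ of size $|V'| - O(1)$. Hence $\lopteq = \scopt \pm O(1)$, and Feige's inapproximability transfers to \eqmad\ with $\delta$ absorbing the additive slack; the diameter-$2$ requirement is guaranteed by the spoke vertex, which also keeps $G$ connected.

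The principal technical obstacle is engineering the exact equality $\mu = k$ rather than the more natural $\mu \geq k$: naive constructions produce unwanted mergers among residual equivalence classes (set vertices, auxiliary vertices, twins from multiple element rows) that can inflate or depress $\mu$ away from $k$. The resolution is to tailor adjacencies so that $\mu = k$ holds precisely when $V'$ is a set cover, and fails otherwise either by rising above $k$ (when residual classes merge with the designated class) or by falling below $k$ (when uncovered elements create too small a class). The range restriction $k \leq n^\epsilon$ with $\epsilon < 1/2$ then arises from the polynomial blow-up in the reduction: the reduced graph has $n = \Theta(\nu k + m)$ vertices, so preserving a logarithmic gap relative to $\ln \nu$ while simultaneously requiring $k \leq n^\epsilon$ forces $\epsilon$ strictly below a threshold dictated by the essentially quadratic growth of $n$ in $k$ and $\nu$.
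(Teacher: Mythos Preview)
Your treatment of part~(c) is correct and essentially identical to the paper's: both argue that any feasible $V'$ must have $|V'|=c$ (equivalently, $V\setminus V'$ is a single class of size $k$) and then brute-force over the $\binom{n}{c}$ candidates.

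For parts~(a) and~(b), your high-level strategy---reduce from \bSC, engineer a gadget so that $\mu=k$ encodes covering, and invoke Feige for the logarithmic gap---matches the paper. But your concrete gadget differs from the paper's in a way that leaves a real gap in the converse direction~(ii). The paper does \emph{not} use $k$ twin replicas per element plus a separate anchor; instead it introduces exactly $k$ ``clique nodes'' adjacent to \emph{every} other vertex (these simultaneously play the role of your spoke and of the designated size-$k$ class), clones \emph{both} element and set nodes a large number $2k+\Theta(n_1)$ of times, and uses \emph{non-membership} edges (set-clone to element-clone when $a_i\notin S_j$). The point of the large clone multiplicity is that, after removing the at most $|V'|\leq\Theta(n_1)$ nodes of the solution, every equivalence class containing a clone still has strictly more than $k$ members; consequently the \emph{only} way to realize a class of size exactly $k$ is via the $k$ clique nodes, and isolating them from all element-clones is precisely the covering condition. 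The paper then needs three separate propositions to rule out optimal solutions $V'$ that contain clique nodes, element-clones, or duplicate set-clones.

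Your construction with exactly $k$ replicas per element lacks this margin: the $k$ replicas of a single element already constitute a candidate size-$k$ class, so from $\mu=k$ you cannot infer that the witness class is your anchor rather than some element block, and hence you cannot ``project to a set cover'' as claimed. Moreover, your sketch does not address why an optimal $V'$ could not profitably include element replicas, the spoke, or anchor vertices---the analogue of the paper's Propositions~6.2--6.4. Finally, your stated failure mode ``falling below $k$ when uncovered elements create too small a class'' is inconsistent with your own gadget: uncovered element blocks have size $\geq k$, not $<k$. The plan is salvageable, but only by inflating the replica count well beyond $k$ and identifying a class (like the paper's clique nodes) that is provably the unique size-$k$ witness.
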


\begin{remark}\label{rem-opt}~\\
\noindent
{\bf (a)} 
For $k=1$, the inapproximability ratio 
in Theorem~\emph{\ref{k-large}(a)}
is asymptotically optimal up to a constant factor because of the 
$(1+\ln (n-1))$-approximation of \eqonemad\ in Theorem~\emph{\ref{k-one}(a)}.

\medskip
\noindent
{\bf (b)} 
The result
in Theorem~\emph{\ref{k-large}(b)}
provides a much stronger inapproximability result compared to that 
in Theorem~\emph{\ref{k-large}(a)}
at the expense of a slightly weaker complexity-theoretic assumption 
$($\IE, $\,\NP\not\subseteq\,\,$\emph{DTIME}$\,(n^{\log\log n})$ vs. \mbox{P}$\,\neq\NP$$)$.
\end{remark}

\subsubsection{The Case of $k=1$}

Note that even when $k=1$ \eqmad\ is $\NP$-hard and even hard to approximate within a logarithmic factor due
to Theorem~\ref{k-large}.
We show the following algorithmic results for \eqmad\ when $k=1$.

\begin{theorem}\label{k-one}~\\
\noindent
{\bf (a)} 
\eqmadone\ admits a $(1+\ln (n-1)\,)$-approximation in $O\left(n^3\right)$ time.

\medskip
\noindent
{\bf (b)} 
If $G$ has at least one node of degree $1$ then 
$\lopteqone=1$ and thus \eqmadone\ can be solved in $O\left(n^3\right)$ time.

\medskip
\noindent
{\bf (c)} 
If 
$G$ does not contain a cycle of $4$ edges then 
$\lopteqone\leq 2$ and thus \eqmadone\ can be solved in $O\left(n^3\right)$ time.
\end{theorem}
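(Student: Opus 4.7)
The plan for part (a) is to reduce \eqmadone\ to a family of \bSC\ instances---one per candidate ``uniquely identified'' node---and apply the greedy $(1+\ln|\cU|)$-approximation algorithm. The key observation is that $\mu\bigl(\cD_{V\setminus V',-V'}\bigr)=1$ holds exactly when at least one node $u\in V\setminus V'$ has a metric representation $\vecd_{u,-V'}$ distinct from those of all other nodes in $V\setminus V'$. Thus, for a fixed candidate $u\in V$, the problem of finding the smallest $V'\subseteq V\setminus\{u\}$ that uniquely identifies $u$ will be formulated as a set cover with universe $\cU_u=V\setminus\{u\}$ and sets
\[
S_w^u \eqdef \{w\}\cup\left\{\,v\in V\setminus\{u,w\} \,:\, \dist_{u,w}\neq\dist_{v,w}\,\right\},\qquad w\in V\setminus\{u\},
\]
since every $v\neq u$ must either be placed in $V'$ itself or else be distinguished from $u$ by some $w\in V'$. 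The algorithm will run greedy set cover on each of the $n$ choices of $u$ and return the best resulting solution.

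For the approximation guarantee, I will argue that if $u^*$ is any node uniquely identified by an optimal solution $\vopteqone$, then the set cover instance corresponding to $u=u^*$ has optimum at most $\lopteqone$, so greedy returns a cover of size at most $(1+\ln(n-1))\,\lopteqone$; the outer minimization over $u\in V$ only helps. The runtime analysis will bound all-pairs shortest distances by $O(n^3)$ (via breadth-first search from every node) and each greedy set cover invocation by $O\bigl(\sum_w |S_w^u|\bigr)=O(n^2)$, giving $O(n^3)$ overall.

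Parts (b) and (c) reduce to one-line explicit constructions. For (b), if $v$ has degree $1$ with unique neighbor $w$, I plan to take $V'=\{v\}$ and observe that $\dist_{w,v}=1$ while $\dist_{u,v}\geq 2$ for every $u\notin\{v,w\}$ (since $w$ is $v$'s sole neighbor); hence $w$ forms a singleton equivalence class and $\mu=1$. For (c), assuming $G$ contains no cycle of four edges, I plan to pick any $u$ with two distinct neighbors $x,y$ (if no such $u$ exists then $G$ has at most two nodes and the claim is trivial) and take $V'=\{x,y\}$; then $\vecd_{u,-\{x,y\}}=(1,1)$, and any other $u'$ with the same representation would have to be a second common neighbor of $x$ and $y$, forcing $x,u,y,u'$ into a $4$-cycle---contradicting the hypothesis---so $u$ is uniquely identified and $\mu=1$.

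The main obstacle is the set-cover coupling in (a): one must carefully verify that the optimum of the set cover instance for \emph{some} $u\in V$ matches $\lopteqone$ exactly, so that the greedy $(1+\ln|\cU|)$-guarantee transfers directly to an approximation bound on \eqmadone. The crucial insight---that $\mu=1$ requires only a \emph{single} witness node, so ``guessing'' this node by looping over all $u$ incurs no loss---is what makes the coupling clean. Parts (b) and (c) are purely existential constructions, and their runtimes are easily absorbed into the $O(n^3)$ bound stated in the theorem.
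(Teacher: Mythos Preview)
Your proposal is correct and, for parts (a) and (b), matches the paper's proof essentially line-for-line: the paper's Algorithm~V is exactly your set-cover reduction with the same sets $S_w^u$, the same coupling argument (the optimal $\vopteqone$ yields a cover of size $\lopteqone$ for the instance attached to any singleton witness $u^*$), and the same $O(n^3)$ accounting; part (b) is identical.

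For part (c) your construction is a mild variant of the paper's. The paper first argues $\diam(G)\geq 2$, picks $v_i,v_j$ at distance~$2$ with a common neighbor $v_\ell$, sets $V'=\{v_i,v_j\}$, and then does a short case analysis over $\nbr(v_i)$, $\nbr(v_j)$, $\nbr(v_\ell)$ to show $\{v_\ell\}$ is a singleton class. You instead pick any $u$ of degree~$\geq 2$, set $V'=\{x,y\}$ for two neighbors $x,y$ of $u$, and note that a second node with representation $(1,1)$ would close a $C_4$ on $x,u,y,u'$. Both arguments hinge on the same observation---two common neighbors of a pair force a $4$-cycle---but yours sidesteps the diameter claim and the case split, at the cost of needing the (easy) remark that a connected graph on $\geq 3$ nodes has a vertex of degree~$\geq 2$.
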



\section{Proof of Theorem~\ref{thm1}}
\label{sec-pfb}

\noindent
{\bf (a)}
We first consider the claim for \mad.
We begin by proving some structural properties of valid solutions for \mad.

\begin{proposition}\label{prop1}
Consider two subsets of nodes $\emptyset\subset V_1\subset V_2\subset V$.
Let $v_i,v_j\in V_2$ be two nodes such that they do not belong to the same equivalence class in  
$\Pi_{{V\setminus V_1,-V_1}}^{=}$.
Then $v_i$ and $v_j$ do not belong to the same equivalence class in 
$\Pi_{{V\setminus V_2,-V_2}}^{=}$ also.
\end{proposition}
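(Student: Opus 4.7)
The plan is to establish the claim by its contrapositive. I will first clarify the implicit hypothesis on $v_i, v_j$: since $\vecd_{v,-V'}$ is only defined for $v \in V\setminus V'$, in order for both $\vecd_{v_i,-V_1}$ and $\vecd_{v_i,-V_2}$ (and similarly for $v_j$) to make sense, we must have $v_i, v_j \in V\setminus V_2 \subseteq V\setminus V_1$. So the contrapositive I intend to prove is: if $v_i, v_j \in V\setminus V_2$ belong to the same equivalence class in $\Pi_{V\setminus V_2,-V_2}^{=}$, then they belong to the same equivalence class in $\Pi_{V\setminus V_1,-V_1}^{=}$.

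The central observation is a simple coordinate-restriction argument. The metric representation $\vecd_{v,-V_2}$ is a vector whose entries are indexed by the elements of $V_2$, and by definition $\vecd_{v,-V_1}$ is obtained by keeping only those entries indexed by elements of $V_1$. Since $V_1 \subset V_2$, this restriction is well-defined. Thus, if $\vecd_{v_i,-V_2} = \vecd_{v_j,-V_2}$, then projecting both sides onto the coordinates indexed by $V_1$ yields $\dist_{v_i,u} = \dist_{v_j,u}$ for every $u \in V_1$, i.e., $\vecd_{v_i,-V_1} = \vecd_{v_j,-V_1}$. This places $v_i$ and $v_j$ in the same equivalence class in $\Pi_{V\setminus V_1,-V_1}^{=}$, which is precisely the contrapositive of the claim.

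Conceptually, this is simply a monotonicity statement: enlarging the reference set (i.e., adding attacker nodes) can only refine the equivalence partition induced on the remaining nodes, never merge two previously distinct classes. Because the argument reduces to an elementary observation about restricting coordinates of a vector, I do not anticipate any substantive obstacle; the only care required is to make explicit the domain assumption $v_i, v_j \in V\setminus V_2$ that is left implicit in the statement.
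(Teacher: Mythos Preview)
Your proof is correct and is essentially the same coordinate-restriction argument the paper uses; the paper phrases it directly (from $\vecd_{v_i,-V_1}\neq\vecd_{v_j,-V_1}$ and $V_1\subset V_2$ it infers $\vecd_{v_i,-V_2}\neq\vecd_{v_j,-V_2}$), while you state the contrapositive, but the content is identical. Your explicit remark that the hypothesis must really be $v_i,v_j\in V\setminus V_2$ (rather than $v_i,v_j\in V_2$ as written) is a correct and useful clarification that the paper leaves implicit.
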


\begin{proof}
Since $v_i$ and $v_j$ are not in the same equivalence class in  
$\Pi_{{V\setminus V_1,-V_1}}^{=}$, 
we have 
$\vecd_{v_i,-V_1}\neq\vecd_{v_j,-V_1}$ which in turn implies (since $V_1\subset V_2$) 
$\vecd_{v_i,-V_2}\neq\vecd_{v_j,-V_2}$ which implies 
$v_i$ and $v_j$ are not in the same equivalence class in  
$\Pi_{{V\setminus V_2,-V_2}}^{=}$.
\end{proof}

\begin{corollary}\label{cor1}
Proposition~\ref{prop1} implies 
$\Pi_{{V\setminus V_2,-V_2}}^{=}\refine\Pi_{{V\setminus V_1,-V_1}}^{=}$.
\end{corollary}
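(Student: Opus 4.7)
The plan is to unpack the paper's (non-standard) definition of refinement and verify each of its three clauses directly from the hypotheses, appealing to Proposition~\ref{prop1} only once, via its contrapositive.

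First, I will observe that since $V_1\subset V_2$, the ground sets satisfy $V\setminus V_2 \subsetneq V\setminus V_1$; this is exactly the strict-inclusion condition $\cup_{t=1}^\ell V_t \subset \cup_{t=1}^k V_t$ demanded by the refinement definition. Thus I am in the setting where the paper's $\refine$ relation can potentially apply.

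Next, I will build $\Pi_{V\setminus V_2,-V_2}^{=}$ from $\Pi_{V\setminus V_1,-V_1}^{=}$ using the three bulleted operations in the definition:
\begin{enumerate}
\item Remove from each class of $\Pi_{V\setminus V_1,-V_1}^{=}$ every node belonging to $\big(V\setminus V_1\big)\setminus \big(V\setminus V_2\big) = V_2\setminus V_1$. This yields a partition of $V\setminus V_2$, which I will call $\widetilde{\Pi}$.
\item Replace each block $Y\in\widetilde\Pi$ by the partition of $Y$ induced by $\Pi_{V\setminus V_2,-V_2}^{=}$, i.e., group together exactly those $v_i,v_j\in Y$ for which $\vecd_{v_i,-V_2}=\vecd_{v_j,-V_2}$.
\item Discard empty blocks.
\end{enumerate}

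The only nontrivial point is that step~(2) is \emph{permissible}: the blocks it produces must themselves be exactly the equivalence classes of $\Pi_{V\setminus V_2,-V_2}^{=}$. For this I will invoke the contrapositive of Proposition~\ref{prop1} (reading its statement with $v_i,v_j\in V\setminus V_2$, which is the only way the quantifier makes sense): if $v_i,v_j\in V\setminus V_2$ lie in the same equivalence class of $\Pi_{V\setminus V_2,-V_2}^{=}$, then they lie in the same equivalence class of $\Pi_{V\setminus V_1,-V_1}^{=}$. Consequently every class of $\Pi_{V\setminus V_2,-V_2}^{=}$ is contained in a single block of $\widetilde\Pi$, so step~(2) is indeed a legal sub-partitioning of the blocks of $\widetilde\Pi$, and the final result equals $\Pi_{V\setminus V_2,-V_2}^{=}$. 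This exhibits $\Pi_{V\setminus V_2,-V_2}^{=}$ as a refinement of $\Pi_{V\setminus V_1,-V_1}^{=}$ in the paper's sense, completing the proof.

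The only mildly delicate point, and hence the main thing to be careful about, is that the paper's refinement relation operates between partitions with \emph{different} ground sets, so one must explicitly first trim $\Pi_{V\setminus V_1,-V_1}^{=}$ down to the ground set $V\setminus V_2$ before further splitting; after this trimming, Proposition~\ref{prop1} supplies precisely the monotonicity of the equivalence relation needed to certify that the further splitting is well-defined.
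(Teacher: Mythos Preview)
Your proposal is correct and is precisely the natural unpacking of the definition of $\refine$ together with (the contrapositive of) Proposition~\ref{prop1}; the paper itself gives no explicit proof of this corollary, treating it as immediate, so your argument is exactly the intended one spelled out in full. Your parenthetical remark that Proposition~\ref{prop1} should be read with $v_i,v_j\in V\setminus V_2$ rather than $v_i,v_j\in V_2$ is also well taken---this is evidently a typo in the statement of the proposition, as its own proof makes clear.
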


Note that 
$\Pi_{{V\setminus V_2,-V_2}}^{=}\refine\Pi_{{V\setminus V_1,-V_1}}^{=}$
in Corollary~\ref{cor1} does not necessarily imply that 
$\mu \left( \cD_{V\setminus V_2,-V_2} \right)\leq
\mu \left( \cD_{V\setminus V_1,-V_1} \right)$.
The following proposition gives some condition for this to happen.

\begin{proposition}\label{prop2}
Consider two subsets of nodes $\emptyset\subset V_1\subset V_2\subset V$, and 
let $S_1,S_2,\dots,S_\ell\subseteq V\setminus V_1$ be the only $\ell>0$ equivalence classes $($subsets of nodes$)$ 
in $\Pi_{{V\setminus V_1,-V_1}}^{=}$
such that 
$\left|S_1\right|=\left|S_2\right|=\dots=\left|S_\ell\right|=\mu \left( \cD_{V\setminus V_1,-V_1} \right)$.
Then, 
\begin{quote}
\begin{enumerate}[label=$\triangleright$]
\item
$\cup_{t=1}^\ell S_t \not\subseteq V_2\setminus V_1$ implies 
$\mu \left( \cD_{V\setminus V_2,-V_2} \right) \leq
\mu \left( \cD_{V\setminus V_1,-V_1} \right)$, and 
\item
if 
$\emptyset\subset V_2\cap S_j\subset S_j$ 
for some $j\in\{1,\dots,\ell\}$
then 
$\mu \left( \cD_{V\setminus V_2,-V_2} \right)<
\mu \left( \cD_{V\setminus V_1,-V_1} \right)$.
\end{enumerate}
\end{quote}
\end{proposition}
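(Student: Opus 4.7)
The plan is to exploit the refinement relation recorded in Corollary~\ref{cor1}: every equivalence class of $\Pi_{V\setminus V_2,-V_2}^{=}$ is contained in some equivalence class of $\Pi_{V\setminus V_1,-V_1}^{=}$. Since $\mu$ is the \emph{minimum} class size, the heart of both claims is to locate a single node in $V\setminus V_2$ whose class under the finer partition is already forced to be small, which then upper-bounds $\mu(\cD_{V\setminus V_2,-V_2})$.

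For the first bullet, from $\cup_{t=1}^{\ell} S_t\not\subseteq V_2\setminus V_1$ I would pick a node $v\in S_j$ (for some $j$) with $v\notin V_2\setminus V_1$. Because $S_j\subseteq V\setminus V_1$ already tells us $v\notin V_1$, this node must in fact lie in $V\setminus V_2$. Let $T$ be the equivalence class of $v$ in $\Pi_{V\setminus V_2,-V_2}^{=}$; by Corollary~\ref{cor1} we have $T\subseteq S_j$, so $|T|\leq |S_j|=\mu(\cD_{V\setminus V_1,-V_1})$, and hence $\mu(\cD_{V\setminus V_2,-V_2})\leq |T|\leq \mu(\cD_{V\setminus V_1,-V_1})$.

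For the second bullet, the hypothesis $\emptyset\subset V_2\cap S_j\subset S_j$ supplies a node $v\in S_j\setminus V_2$. As before, the class $T$ of $v$ in $\Pi_{V\setminus V_2,-V_2}^{=}$ is contained in $S_j$; but additionally $T\subseteq V\setminus V_2$, forcing $T\subseteq S_j\setminus V_2$. The inclusion $S_j\setminus V_2\subsetneq S_j$ is \emph{strict} because $V_2\cap S_j\neq\emptyset$, and so $|T|<|S_j|=\mu(\cD_{V\setminus V_1,-V_1})$, giving the strict inequality claimed.

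I do not expect a serious obstacle here, as the whole argument reduces to bookkeeping about how equivalence classes split under refinement. The one point that must be handled cleanly is the translation of the set-theoretic hypothesis in Part~1 into the existence of a witness node that is actually in $V\setminus V_2$ (and not merely outside $V_2\setminus V_1$), which uses $S_j\subseteq V\setminus V_1$ in an essential way. One should also note in passing that $V_2\subset V$ guarantees $V\setminus V_2\neq\emptyset$, so that $\mu(\cD_{V\setminus V_2,-V_2})$ is well defined and the above comparisons are meaningful.
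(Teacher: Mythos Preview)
Your proposal is correct and follows essentially the same approach as the paper's proof: use Corollary~\ref{cor1} to note that the equivalence class of a surviving witness node in the finer partition sits inside some minimum-size class $S_j$, then bound $\mu$ accordingly (with strictness in Part~2 because a node of $S_j$ was absorbed into $V_2$). Your write-up is in fact somewhat cleaner than the paper's, which contains a few variable-naming slips, and you explicitly justify the step ``$v\notin V_2\setminus V_1$ and $v\in S_j\subseteq V\setminus V_1$ imply $v\in V\setminus V_2$,'' which the paper leaves implicit.
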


\begin{proof}
Since  
$V_2\cap S_j\subset S_j$, 
there exists a node $v_p$ such that 
$v_p\in S_j$ and 
$v_p\notin V_2$.
Similarly, 
since  
$\emptyset\subset V_2\cap S_j$, 
there exists a node $v_q$ such that 
$v_q\in S_j$ and 
$v_q\in V_2$.
By Corollary~\ref{cor1}, 
$\Pi_{{V\setminus V_2,-V_2}}^{=}\refine\Pi_{{V\setminus V_1,-V_1}}^{=}$ and 
thus the following implications hold:
\begin{itemize}
\item
If $\cup_{t=1}^\ell V_t \not\subseteq V_2\setminus V_1$ then 
$\Pi_{{V\setminus V_2,-V_2}}^{=}$ 
contains an equivalence class (subset of nodes) $S_{j'}\subseteq S_j$ such that $v_i\in S_{j'}$. 
This implies
$\mu \left( \cD_{V\setminus V_2,-V_2} \right) \leq
\left| S_{j'} \right| \leq \left| S_j \right| = 
\mu \left( \cD_{V\setminus V_1,-V_1} \right)$.
\item
If
there exists a $S_j$ such that 
$\emptyset\subset V_2\cap S_j\subset S_j$ then 
$\Pi_{{V\setminus V_2,-V_2}}^{=}$ 
contains an equivalence class $\emptyset\subset S_{j'}\subset S_j$ with $v_p\in S_{t'}$. 
This implies
$\mu \left( \cD_{V\setminus V_2,-V_2} \right) \leq
\left| S_{j'} \right| < \left| S_j \right| = 
\mu \left( \cD_{V\setminus V_1,-V_1} \right)$.
\end{itemize}
\end{proof}

Based on the above structural properties, we design Algorithm~I for \mad\ as shown below.

\medskip
\begin{longtable}{r l l }
%
\toprule
\multicolumn{3}{c}{Algorithm I: $O\left(n^4\right)$ time deterministic algorithm for \mad.}
\\
\midrule
{\bf 1.} & 
\multicolumn{2}{l}{
Compute $\vecd_i$ for all $i=1,2,\dots,n$ in $O\left(n^3\right)$ time using Floyd-Warshall algorithm~\cite[p. 629]{CLR90}
}
\\
[5pt]
{\bf 2.} & 
\multicolumn{2}{l}{
$\hatloptgt\leftarrow\infty$ ; 
$\hatvoptgt\leftarrow\emptyset$
}
\\
[5pt]
{\bf 3.} & 
\multicolumn{2}{l}{
\ffor\ each $v_i\in V$ \ddo 
  \hspace*{0.2in} $(*$ we \emph{guess} $v_i$ to belong to $\voptgt$ $*)$
}
\\
[5pt]
{\bf 3.1} & & 
  $V'=\left\{ v_i \right\}$ ; $\mathsf{done}\leftarrow\mathsf{FALSE}$
\\
[5pt]
{\bf 3.2} & & 
\wwhile\ $\big($ $(V\setminus V'\neq\emptyset)$ \textsf{AND} $(\mathsf{NOT}\,\,\mathsf{done})$ $\big)$ \ddo 
\\
[5pt]
{\bf 3.2.1}
& & 
\hspace*{0.2in}
compute 
$\mu \left(  \cD_{V\setminus V',-V'} \right)$
\\
[5pt]
{\bf 3.2.2}
& & 
\hspace*{0.2in}
\iif\ 
$\Big( \, \big( \, \mu \left(  \cD_{V\setminus V',-V'} \right)\geq k \, \big)$ and 
$\big( \, |V'|<\hatloptgt \, \big) \, \Big)$ 
\\
{\bf 3.2.3}
& & 
\hspace*{0.4in}
\tthen\
$\,\,\,\,\,$
$\hatloptgt\leftarrow|V'|$ ; 
$\hatvoptgt\leftarrow V'$ ; 
$\mathsf{done}\leftarrow\mathsf{TRUE}$
\\
[5pt]
{\bf 3.2.4}
& & 
\hspace*{0.4in}
\eelse\
$\,\,\,\,\,\,\,\,$
let $V_1,V_2,\dots,V_\ell$ be the \emph{only} $\ell>0$ equivalence classes (subsets of nodes) 
\\
& & 
\hspace*{1.4in}
in $\Pi_{{V\setminus V',-V'}}^{=}$
such that 
$\left|V_1\right|=\left|V_2\right|=\dots=\left|V_\ell\right|=\mu \left( \cD_{V\setminus V',-V'} \right)$
\\
[5pt]
{\bf 3.2.5}
& & 
\hspace*{0.85in}
$V'\,\leftarrow\,V'\cup \left( \cup_{t=1}^\ell V_t \right)$
\\
[5pt]
{\bf 4.} & 
\multicolumn{2}{l}{
\rreturn\ $\hatloptgt$ and $\hatvoptgt$ as our solution
}
\\
\bottomrule
\end{longtable}
\medskip

\begin{lemma}[Proof of correctness]
Algorithm I returns an optimal solution for \mad.
\end{lemma}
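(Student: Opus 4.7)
The plan is to fix an arbitrary optimal solution $\voptgt$ (assuming the instance is feasible; the infeasible case is handled at the end), pick a single node $v_{i^*}\in \voptgt$, and analyze the iteration of the outer \textbf{for} loop in Step~3 where $v_i = v_{i^*}$. I will show that by the end of this iteration, $\hatloptgt \leq |\voptgt|$. Combined with the obvious lower bound $\hatloptgt \geq \loptgt$ (Step~3.2.3 only fires when $V'$ is feasible, so every value ever assigned to $\hatloptgt$ is the size of a feasible set), this forces $\hatloptgt = \loptgt = |\voptgt|$ at termination.

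The heart of the argument is the following loop invariant for the inner \textbf{while} loop of the $v_{i^*}$-pass: at the top of each iteration, either $\hatloptgt \leq |\voptgt|$ already, or $V' \subseteq \voptgt$. The base case is immediate since $V' = \{v_{i^*}\} \subseteq \voptgt$. For the inductive step, assume the second disjunct holds, i.e., $V' \subseteq \voptgt$ and $\hatloptgt > |\voptgt|$. If $\mu(\cD_{V\setminus V',-V'}) \geq k$, then $V'$ is feasible, so $|V'|\geq \loptgt = |\voptgt|$; combined with $V'\subseteq \voptgt$ this forces $V' = \voptgt$, and since $|V'| < \hatloptgt$, Step~3.2.3 updates $\hatloptgt \leftarrow |\voptgt|$, establishing the first disjunct. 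Otherwise $\mu(\cD_{V\setminus V',-V'}) < k \leq \mu(\cD_{V\setminus \voptgt,-\voptgt})$, so in particular $V' \subsetneq \voptgt$, and Step~3.2.5 expands $V'$ by $\cup_{t=1}^{\ell} V_t$ where $V_1,\dots,V_\ell$ are the minimum-cardinality equivalence classes of $\Pi_{V\setminus V',-V'}^{=}$. I claim every $V_t \subseteq \voptgt \setminus V'$; suppose not, so some $V_j$ contains a node $v_p \notin \voptgt$. Then $\cup_{t=1}^{\ell} V_t \not\subseteq \voptgt\setminus V'$, and applying the first bullet of Proposition~\ref{prop2} with the roles $V_1 := V'$ and $V_2 := \voptgt$ yields $\mu(\cD_{V\setminus \voptgt,-\voptgt}) \leq \mu(\cD_{V\setminus V',-V'}) < k$, contradicting feasibility of $\voptgt$. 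Hence the expanded $V'$ remains inside $\voptgt$, preserving the invariant.

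For termination of the $v_{i^*}$-pass, if the first disjunct ever holds we are immediately done with $\hatloptgt \leq |\voptgt|$; otherwise the second disjunct holds throughout and each expansion strictly enlarges $V'$ (minimum equivalence classes are nonempty) while staying inside $\voptgt \subsetneq V$, so after at most $|\voptgt|$ expansions we reach $V' = \voptgt$, at which point the first case of the inductive argument fires and sets $\hatloptgt \leftarrow |\voptgt|$. Thus after the $v_{i^*}$-pass we have $\hatloptgt \leq |\voptgt| = \loptgt$, and after the entire outer loop finishes, $\hatloptgt = \loptgt$. In the infeasible case, no $V'$ satisfies $\mu \geq k$, so Step~3.2.3 never fires in any iteration and $\hatloptgt$ stays at its initialized value $\infty$, matching the convention $\loptgt = \infty$. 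The main obstacle is the inductive step of the invariant: the only way a greedy rule that commits, in a single shot, to \emph{all} currently minimum equivalence classes can fail is by ``leaking'' outside $\voptgt$, and Proposition~\ref{prop2} is tailored precisely to rule out this leakage; once that step is secured, the rest is bookkeeping.
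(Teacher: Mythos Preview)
Your proof is correct and follows essentially the same approach as the paper. Both arguments fix a node of $\voptgt$, track the iteration of the outer loop seeded at that node, and show that the greedy expansion in Step~3.2.5 cannot ``leak'' outside $\voptgt$ without forcing a contradiction via the refinement structure (Corollary~\ref{cor1}/Proposition~\ref{prop2}); your organization as a loop invariant with a direct appeal to the first bullet of Proposition~\ref{prop2} is a tidy repackaging of the paper's case analysis (Cases~1, 2.1, 2.2), but the key idea is identical.
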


\begin{proof}
Assume that $\voptgt\neq\emptyset$ since otherwise obviously our returned solution is correct.
Fix any optimal solution (subset of nodes) $\voptgt$ of measure 
$\mu \left(  \cD_{V\setminus \voptgt,-\voptgt} \right)\geq k$ 
and 
select any arbitrary node $v_\ell\in\voptgt$.
Consider the iteration of the \ffor\ loop in Step~3 when $v_i$ is equal to $v_\ell$.
We now analyze the run of \emph{this particular iteration}.

Let $\left\{v_\ell\right\}=V_1\subset V_2\subset\dots\subset V_\kappa$
be the $\kappa$ subsets of nodes that were assigned to $V'$ in \emph{successive} iterations of 
the \wwhile\ loop in Step~3.2.
We have the following cases to consider.
\begin{description}
\item[Case 1: $\voptgt=V_t$ for some $t\in\{1,2,\dots,\kappa\}$.]
Then, 
our solution is a set $\hatvoptgt$ such that 
\\
$\mu \left(  \cD_{V\setminus \hatvoptgt,-\hatvoptgt} \right)\geq k$ and 
$\hatloptgt\leq\loptgt$.
\item[Case 2: $\voptgt\neq V_t$ for any $t\in\{1,2,\dots,\kappa\}$.]
Since 
$V_1=\left\{v_\ell\right\}\subset\voptgt$ and 
$V_t\neq\voptgt$ for any $t\in\{1,2,\dots,\kappa\}$, 
only one of the following cases is possible:
\begin{description}
\item[Case 2.1: 
there exists $r\in\{1,2,\dots,\kappa-1\}$ such that 
$V_r\subset \voptgt$ but $V_{r+1}\not\subseteq\voptgt$.
]
Let 
\\
$V_{r,1}, V_{r,2},\dots,V_{r,p}\subseteq V\setminus V_r$ be all the $p>0$ equivalence classes (subsets of nodes) 
in $\Pi_{{V\setminus V_r,-V_r}}^{=}$
such that 
$\left|V_{r,1}\right|=\left|V_{r,2}\right|=\dots=\left|V_{r,p}\right|=\mu \left( \cD_{V\setminus V_r,-V_r} \right)$.
Now we note the following:
\begin{itemize}
\item
By Step~3.2.5, 
$V_{r+1}=V_r\cup V_{r,1}\cup V_{r,2}\cup\dots\cup V_{r,p}$.
\item
Thus,
$V_r\subset \voptgt$ 
and 
$V_{r+1}\not\subseteq\voptgt$
implies 
$
V_{r,1}\cup V_{r,2}\cup\dots\cup V_{r,p}
\not\subseteq\voptgt$, 
and therefore 
there exists an index $1\leq s\leq p$ such that 
$
Z=
V_{r,s}\setminus
\voptgt \neq\emptyset
$.
Let 
$
Z'=
V_{r,s}
\setminus
Z
$ ($Z'$ could be empty).
Then, 
for some $\emptyset\subset Z''\subseteq Z$, 
$Z''$ is an equivalence class in  
$
\Pi_{{V\setminus \left( V_r \cup Z' \right), -\left(V_r \cup Z' \right)}}^{=}$
implying 
\begin{gather}
\mu \left(  \cD_{V\setminus \left( V_r \cup Z' \right),-\left( V_r \cup Z' \right) } \right)
\leq 
\left| Z'' \right| 
\leq 
|Z|
\label{eq2}
\end{gather}
Since 
$V_r\cup Z'\subseteq\voptgt$, 
we have
\begin{multline*}
\underset{
\text{\footnotesize (in Corollary~\ref{cor1}, set $V_2=\voptgt$ and $V_1=V_r\cup Z'$)}
}
{
\Pi_{{V\setminus \voptgt,-\voptgt}}^{=}
\refine
\Pi_{{V\setminus \left( V_r \cup Z' \right), -\left(V_r \cup Z' \right)}}^{=}
}
\\
\Rightarrow\,
k
\!
\leq
\!
\mu \left(  \cD_{V\setminus \voptgt,-\voptgt} \right)
\leq 
\mu \left(  \cD_{V\setminus \left( V_r \cup Z'\right),-\left( V_r \cup Z'\right)} \right)
\!\!
\!\!
\!\!
\!\!
\!\!
\underset{
\text{\begin{tabular}{c} \footnotesize by  \footnotesize\eqref{eq2} \end{tabular}}
}
{
\leq 
}
\!\!
\!\!
\!\!
\!\!
\!\!
|Z|
\leq 
\left| V_{r,s} \right| 
=
\mu \left( \cD_{V\setminus V_r,-V_r} \right)
\end{multline*}
\end{itemize}
Thus,
$\mu \left( \cD_{V\setminus V_r,-V_r} \right)\geq k$ and 
$\left| V_r \right|<\left| \voptgt\right| = \loptgt$, contradicting the optimality of $\loptgt$.
\item[Case 2.2: $V_{\kappa}\subset\voptgt$.]
If $\mathsf{done}$ was set to $\mathsf{TRUE}$ at the last iteration of the \wwhile\ loop, then 
$\mu \left( \cD_{V\setminus V_{\kappa},-V_{\kappa}} \right)\geq k$ and 
$\left| V_{\kappa} \right|<\left| \voptgt\right| = \loptgt$, contradicting the optimality of $\loptgt$.
Thus, 
$\mathsf{done}$ 
must have remained $\mathsf{FALSE}$ after the last iteration of the \wwhile\ loop, which 
implies 
$\mu \left( \cD_{V\setminus V_\kappa,-V_\kappa} \right)<k$.
Let
$V_{\kappa,1},V_{\kappa,2},\dots,V_{\kappa,p}\subseteq V\setminus V_\kappa$ be all the $p>0$ equivalence classes (subsets of nodes) 
in $\Pi_{{V\setminus V_\kappa,-V_\kappa}}^{=}$
such that 
$\left|V_{\kappa,1}\right|=\left|V_{\kappa,2}\right|=\dots=\left|V_{\kappa,p}\right|=
\mu \left( \cD_{V\setminus V_\kappa,-V_\kappa} \right)$.
Since 
$V_{\kappa}\subset\vopt$, we have 
\begin{multline*}
\underset{
\text{\footnotesize (in Corollary~\ref{cor1}, set $V_2=\vopt$ and $V_1=V_\kappa$)}
}
{
\Pi_{{V\setminus \vopt,-\vopt}}^{=}
\refine
\Pi_{{V\setminus V_\kappa, -V_\kappa }}^{=}
}
\\
\!\!\!\!\!
\Rightarrow\,\,\,
k
\!
\leq
\!
\mu \left(  \cD_{V\setminus \vopt,-\vopt} \right)
\leq 
\mu \left(  \cD_{V\setminus V_\kappa ,- V_\kappa } \right)
\!\!
\!\!
\!\!
\!\!
\!\!
\underset{
\text{\begin{tabular}{c} \footnotesize by  \footnotesize\eqref{eq2} \end{tabular}}
}
{
\leq 
}
\!\!
\!\!
\!\!
\!\!
\!\!
|Z|
\leq 
\left| V_{\kappa,p} \right| 
=
\mu \left( \cD_{V\setminus V_\kappa,-V_\kappa} \right)
\end{multline*}
Thus,
$\mu \left( \cD_{V\setminus V_\kappa,-V_\kappa} \right)\geq k$ 
contradicting our assumption of 
$\mu \left( \cD_{V\setminus V_\kappa,-V_\kappa} \right)<k$. 
\end{description}
\end{description}
\end{proof}

\begin{lemma}[Proof of time complexity]
Algorithm I runs in 
$O\left(n^4\right)$ time.
\end{lemma}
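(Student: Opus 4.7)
The plan is to account for the three sources of work in Algorithm~I separately: the Floyd-Warshall preprocessing in Step~1, the outer \ffor\ loop in Step~3, and the nested \wwhile\ loop in Step~3.2. Step~1 costs $O(n^3)$ by standard analysis, so it suffices to show that the work done in Step~3 is $O(n^4)$. Since the outer loop runs $n$ times, the goal reduces to proving that each outer iteration performs only $O(n^3)$ work.

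First I would bound the number of iterations of the inner \wwhile\ loop within a single outer iteration. The key observation is that whenever the loop body is executed and $\mathsf{done}$ is not set in Step~3.2.3, the update in Step~3.2.5 sets $V' \leftarrow V' \cup (\cup_{t=1}^\ell V_t)$, where $V_1,\dots,V_\ell$ are equivalence classes in $\Pi_{V \setminus V',-V'}^{=}$ (and hence disjoint from $V'$) with $\ell\geq 1$ and each $|V_t|\geq 1$. Thus $|V'|$ strictly increases each iteration, so the loop executes at most $n$ times before either termination or $V\setminus V' = \emptyset$.

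Next I would bound the cost of one inner iteration. The dominant operations are the equivalence-class computation in Step~3.2.1 and the identification of minimum-cardinality classes in Step~3.2.4. Using the distances precomputed in Step~1, the metric representation $\vecd_{v_j,-V'}$ for each $v_j\in V\setminus V'$ is obtained in $O(|V'|)=O(n)$ time by simply reading off the relevant coordinates, giving $O(n^2)$ time in total to assemble all the representations. To group these at most $n$ vectors, each of length at most $n$ with entries in $\{0,1,\dots,n-1\}$, into equivalence classes, I would apply radix sort, which runs in $O(n^2)$ time; a linear scan of the sorted list then produces the partition and its smallest classes in a further $O(n^2)$ time. All remaining bookkeeping (the conditional in Step~3.2.2 and the update in Step~3.2.5) is absorbed by $O(n)$ work per iteration.

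Combining these bounds, each outer iteration costs at most $n\cdot O(n^2)=O(n^3)$, and summing over the $n$ outer iterations yields $O(n^4)$, which also absorbs the $O(n^3)$ preprocessing cost. The main subtlety I anticipate is justifying the $O(n^2)$ bound for Step~3.2.1: a naive pairwise comparison of metric representations would cost $O(n^3)$ per inner iteration and inflate the overall bound to $O(n^5)$, so the radix-sort observation (or an equivalent bucket- or hash-based grouping of the vectors) is essential to the claimed complexity.
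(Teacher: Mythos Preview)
Your proof is correct and achieves the claimed $O(n^4)$ bound, but it takes a different route from the paper in how the equivalence partition is computed. You recompute $\Pi^{=}_{V\setminus V',-V'}$ \emph{from scratch} in every inner iteration by radix-sorting the full length-$|V'|$ metric-representation vectors, at cost $O(n^2)$ per iteration; combined with at most $n$ inner iterations and $n$ outer iterations this gives $O(n^4)$. The paper instead maintains the partition \emph{incrementally}: by Corollary~\ref{cor1}, enlarging $V'$ by a node $v_j$ can only refine the current partition, so each class $S$ is split by bucket-sorting its members on the single new coordinate $\dist_{\cdot,v_j}$; over one full run of the inner loop at most $n$ nodes are ever added to $V'$, and each single-node refinement costs $O(n^2)$, again yielding $O(n^3)$ per outer iteration. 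Your argument is arguably cleaner because it needs no appeal to the refinement structure, whereas the paper's version exploits the monotone growth of $V'$ and avoids re-processing coordinates that have already been handled.
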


\begin{proof}
There are $n$ choices for the \ffor\ loop in Step~3. For each such choice, we analyze the execution of the 
\wwhile\ loop in Step~3.2.
The running time in each iteration of the \wwhile\ loop is dominated by the time taken to compute 
$\Pi^{=}_{V\setminus \left( V'\cup\, \left( \cup_{t=1}^{\ell}V_t\right)\,\right) ,-V'\cup\,\left( \cup_{t=1}^{\ell}V_t\right) }$
from 
$\Pi^{=}_{V\setminus V',-V'}$.
Suppose that 
$\cup_{t=1}^{\ell}V_t=\left\{v_{i_1},v_{i_2},\dots,v_{i_p}\right\}$.
By Corollary~\ref{cor1}, 
\begin{multline*}
\Pi^{=}_{V\setminus \left( V' \cup \left\{v_{i_1},v_{i_2},\dots,v_{i_p-1},v_{i_p}\right\} \right),-V'\cup \left\{v_{i_1},v_{i_2},\dots,v_{i_p-1},v_{i_p}\right\} }
\refine
\Pi^{=}_{V\setminus \left( V' \cup \left\{v_{i_1},v_{i_2},\dots,v_{i_p-1}\right\} \right),-V'\cup \left\{v_{i_1},v_{i_2},\dots,v_{i_p-1}\right\} }
\\
\refine
\dots
\refine
\Pi^{=}_{V\setminus \left( V' \cup \left\{v_{i_1},v_{i_2}\right\} \right),-V'\cup \left\{v_{i_1},v_{i_2}\right\} }
\refine
\Pi^{=}_{V\setminus \left( V' \cup \left\{v_{i_1}\right\} \right),-V'\cup \left\{v_{i_1}\right\} }
\refine
\Pi^{=}_{V\setminus V',-V'}
\end{multline*}
Thus, it follows that the \emph{total} time to execute \emph{all} iterations of the \wwhile\ loop \emph{for a specific choice} of $v_i$ 
in Step~3 is of the order of $n$ times the time taken to solve a problem of the following kind:

\begin{quote}
\em
for a subset of nodes $\emptyset \subset V_1\subset V$,  
given 
$\Pi^{=}_{V\setminus V_1,-V_1}$
and a node $v_j\in V\setminus V_1$, compute 
$\Pi^{=}_{V\setminus \left( V_1 \cup \left\{v_j\right\} \right),-\left( V_1 \cup \left\{v_j\right\} \right)}$.
\end{quote}

\noindent
Since 
$\Pi^{=}_{V\setminus \left( V_1 \cup \left\{v_j\right\} \right),-\left( V_1 \cup \left\{v_j\right\} \right)}$
is a refinement of 
$\Pi^{=}_{V\setminus V_1,-V_1}$
by Corollary~\ref{cor1}, we can use the following simple strategy.
For every set 
$S\in\Pi^{=}_{V\setminus V',-V'}$, 
we split $S\setminus\left\{v_j\right\}=\left\{ v_{i_1},v_{i_2},\dots,v_{i_s} \right\}$ 
into two or more parts, if needed, by doing a bucket-sort (with $n$ bins) 
in $O(n\,|S|)$ time 
on the sequence of values 
$
\dist_{v_{i_1},v_j},
\dist_{v_{i_2},v_j},
\dots,
\dist_{v_{i_s},v_j},
$.
The total time taken for all sets in 
$\Pi^{=}_{V\setminus V',-V'}$ is thus 
$
\sum_{S\in\Pi^{=}_{V\setminus V',-V'}} 
O\left(
n \, |S| 
\right)
=O\left( n^2 \right)
$.
\end{proof}

This completes the proof for \mad. Now we consider the claim for \nokmad.
Obviously, \nokmad\ can be solved in $O\left(n^5\right)$ time by solving 
\mad\ for $k=n-1,n-2,\dots,1$ in this order and selecting the largest $k$ as $\kopt$ for which 
$\loptgt<\infty$.
However, we can modify the steps of Algorithm~I directly to solve \nokmad\ in 
$O\left(n^4\right)$ time,
as shown in Algorithm~II.

\medskip
\begin{longtable}{r l l }
\toprule
\multicolumn{3}{c}{Algorithm II: $O\left(n^4\right)$ time deterministic algorithm for \nokmad}
%
\\
\multicolumn{3}{c}
{
(changes from Algorithm-I are shown enclosed in 
\raisebox{0.1cm}{\fbox{$\,\,\,\,\,\,$}}$\,$)
}
\\
\midrule
%
%
{\bf 1.} & 
\multicolumn{2}{l}{
Compute $\vecd_i$ for all $i=1,2,\dots,n$ in $O\left(n^3\right)$ time using Floyd-Warshall algorithm~\cite[p. 629]{CLR90}
}
\\
[5pt]
{\bf 2.} & 
\multicolumn{2}{l}{
$\hatvoptgt\leftarrow\emptyset$ ; $\boxed{\hatkopt\leftarrow 0}$ 
}
\\
[5pt]
{\bf 3.} & 
\multicolumn{2}{l}{
\ffor\ each $v_i\in V$ \ddo 
  \hspace*{0.2in} $(*$ we \emph{guess} $v_i$ to belong to $\voptgt$ $*)$
}
\\
[5pt]
{\bf 3.1} & & 
  $V'=\left\{ v_i \right\}$ 
\\
[5pt]
{\bf 3.2} & & 
\wwhile\ $\boxed{\big(V\setminus V'\neq\emptyset\big)}$ \ddo 
\\
[5pt]
{\bf 3.2.1}
& & 
\hspace*{0.2in}
compute 
$\mu \left(  \cD_{V\setminus V',-V'} \right)$
\\
[5pt]
{\bf 3.2.2}
& & 
\hspace*{0.2in}
\iif\ $\boxed{\big( \, \mu \left(  \cD_{V\setminus V',-V'} \right) > \hatkopt \, \big)}$
\\
{\bf 3.2.3}
& & 
\hspace*{0.4in}
\tthen\ $\,\,\,\,$
$\boxed{\hatkopt\leftarrow \mu \left(  \cD_{V\setminus V',-V'} \right)}$ ; 
$\hatvoptgt\leftarrow V'$
%
\\
[5pt]
{\bf 3.2.4}
& & 
\hspace*{0.4in}
\eelse\
$\,\,\,\,\,\,\,\,$
let $V_1,V_2,\dots,V_\ell$ be the \emph{only} $\ell>0$ equivalence classes (subsets of nodes) 
\\
& & 
\hspace*{1.4in}
in $\Pi_{{V\setminus V',-V'}}^{=}$
such that 
$\left|V_1\right|=\left|V_2\right|=\dots=\left|V_\ell\right|=\mu \left( \cD_{V\setminus V',-V'} \right)$
\\
[5pt]
{\bf 3.2.5}
& & 
\hspace*{0.85in}
$V'\,\leftarrow\,V'\cup \left( \cup_{t=1}^\ell V_t \right)$
\\
[5pt]
{\bf 4.} & 
\multicolumn{2}{l}{
\rreturn\ 
$\boxed{\hatkopt}$ and $\hatvoptgt$ as our solution
}
\\
\bottomrule
\end{longtable}
\medskip

The proof of correctness is very similar (and, in fact simpler due to elimination of some cases) to that of 
\mad.

\bigskip

\noindent
{\bf (b)}
Our solution is the obvious randomization of Algorithm~II (for \mad) or Algorithm-II (for \nokmad) as shown below.

\begin{longtable}{l l l l}
\toprule
\multicolumn{4}{c}{Algorithm III (resp. Algorithm-IV): $O\left(\frac{n^4 \,\log n}{k}\right)$ time 
     randomized algorithm for \mad\ (resp. \nokmad)}
%
\\
\midrule
{\bf 1.} & 
\multicolumn{3}{l}{
Compute $\vecd_i$ for all $i=1,2,\dots,n$ in $O\left(n^3\right)$ time using Floyd-Warshall algorithm
}
\\
[5pt]
{\bf 2.} & 
\multicolumn{3}{l}{$\hatloptgt\leftarrow\infty$ ; $\hatvoptgt\leftarrow\emptyset$ \hspace*{0.15in} (for \mad)}
\\
[3pt]
& or & & 
\\
[3pt]
&
\multicolumn{3}{l}{$\hatvoptgt\leftarrow\emptyset$ ; $\hatkopt\leftarrow 0$ \hspace*{0.2in} (for \nokmad)}
\\
[5pt]
{\bf 3.} & 
\multicolumn{3}{l}{\rrepeat\ $\left\lceil\frac{2 n\ln n}{k}\right\rceil$ times}
\\
[5pt]
& {\bf 3.1} & \multicolumn{2}{l}{select a node $v_i$ uniformly at random from the $n$ nodes} 
\\
[5pt]
& {\bf 3.2} & execute Step 3.1 and Step~3.2 (and its sub-steps) of Algorithm~I & (for \mad)
\\
[3pt]
& & or & 
\\
[3pt]
& &         execute Step 3.1 and Step~3.2 (and its sub-steps) of Algorithm~II & (for \nokmad)
\\
[5pt]
{\bf 4.} & 
\multicolumn{3}{l}{\rreturn\ the best of all solutions found in Step~3}
\\
\bottomrule
\end{longtable}

The success probability $p$ is given by 
\begin{multline*}
p
=
\Pr\left[ \text{$v_i\in\voptgt$ in \emph{at least one} of the $\left\lceil\frac{2 n\ln n}{k}\right\rceil$ iterations} \right]
\\
=
1-
\Pr\left[\!\!\text{$v_i\notin\voptgt$ in \emph{each} of the $\left\lceil\frac{2 n\ln n}{k}\right\rceil$ iterations}\!\!\right]
\geq
1- \left( 1 - \frac{k}{n} \right)^{ \left\lceil\frac{2n\ln n}{k}\right\rceil }
>
1 - \frac{1}{\bee^{2\ln n}}
=
1 - \frac{1}{n^2}
\end{multline*}



\section{Proof of Theorem~\ref{k-large}}

\noindent
{\bf (a)} 
\eqmad\ trivially belongs to $\NP$ for any $k$, thus 
we need to show that it is also $\NP$-hard.

The standard $\NP$-complete \emph{minimum dominating set} (\mds) problem for a graph is defined as follows~\cite{GJ79}.
Our input is a connected undirected unweighted graph $G=(V,E)$. A subset of nodes $V'\subset V$ is called a 
\emph{dominating set} if and only if every node in $V\setminus V'$ is adjacent to some node in $V'$.
The objective of \mds\ is to find a dominating set of nodes of \emph{minimum} cardinality.
Let $\nu(G)$ denote the cardinality of a minimum dominating set for a graph $G$.
It is well-known that the \mds\ and \bSC\ problems have precisely the same approximability via approximation-preserving 
reductions in both directions and, in particular, there exists a standard reduction from \bSC\ to \mds\ as follows.
Given an instance 
$\cU=\left\{a_1,a_2,\dots,a_n\right\}$
and 
$S_1,S_2,\ldots,S_m\subseteq \cU$
of \bSC, we create the following instance $G_1=\left(V_1,E_1\right)$ of \mds. $V_1$ has an \emph{element node}
$v_{a_i}$ for every element $a_i\in\cU$ and a \emph{set node} $v_{S_j}$ for every set $S_j$ with $j\in\{1,2,\dots,m\}$. 
There are two types of edges in $E_1$.
Every set node $v_{S_j}$ has an edge to every other set node $v_{S_\ell}$ and the collection of these edges is
called the set of \emph{clique edges}. Moreover, a set node $v_{S_j}$ is connected to an element node $v_{a_i}$ if and only if
$a_i\in S_j$ and the collection of these edges is called the set of \emph{membership edges}. A standard straightforward 
argument shows that $\cI\subset \{1,2,\dots,m\}$ is a solution of \bSC\ if and only if 
the collection of set nodes $\left\{ \,v_{S_i} \,|\, i\in\cI \, \right\}$ is a solution of \mds\ on $G_1$ 
and thus $\scopt=\nu\left(G_1\right)$.

For the purpose of our $\NP$-hardness reduction, it would be more convenient to work with a restricted version 
of \bSC\ known as the \emph{exact cover by $3$-sets} (\xtc) problem. 
Here we have exactly $n$ elements and exactly $n$ sets where $n$ is a multiple of three, every set contains exactly $3$ elements and 
every element occurs in exactly $3$ sets. Obviously we need at least $\frac{n}{3}$ sets to cover all the $n$ elements. 
Letting $\xtcopt$ to denote the number of sets in an optimal solution of \xtc,
it is well-known that problem of deciding whether $\xtcopt=\frac{n}{3}$ is in fact
$\NP$-complete.

Let $n_1=\frac{-6k+\sqrt{36k^2+24(n-k)}}{4}$ be the real-valued solution of the quadratic equation 
$n_1\left( 2k+\frac{2n_1}{3} \right) + k =n$.
Note that since $k\leq n^\eps$ for some constant $\eps<\frac{1}{2}$, 
we have 
$n_1=\Theta\left(\sqrt{n}\,\right)
$, 
\IE, $n$ and $n_1$ are ``polynomially related''.

%
\begin{figure}[ht]
\centering
\includegraphics[width=\linewidth]{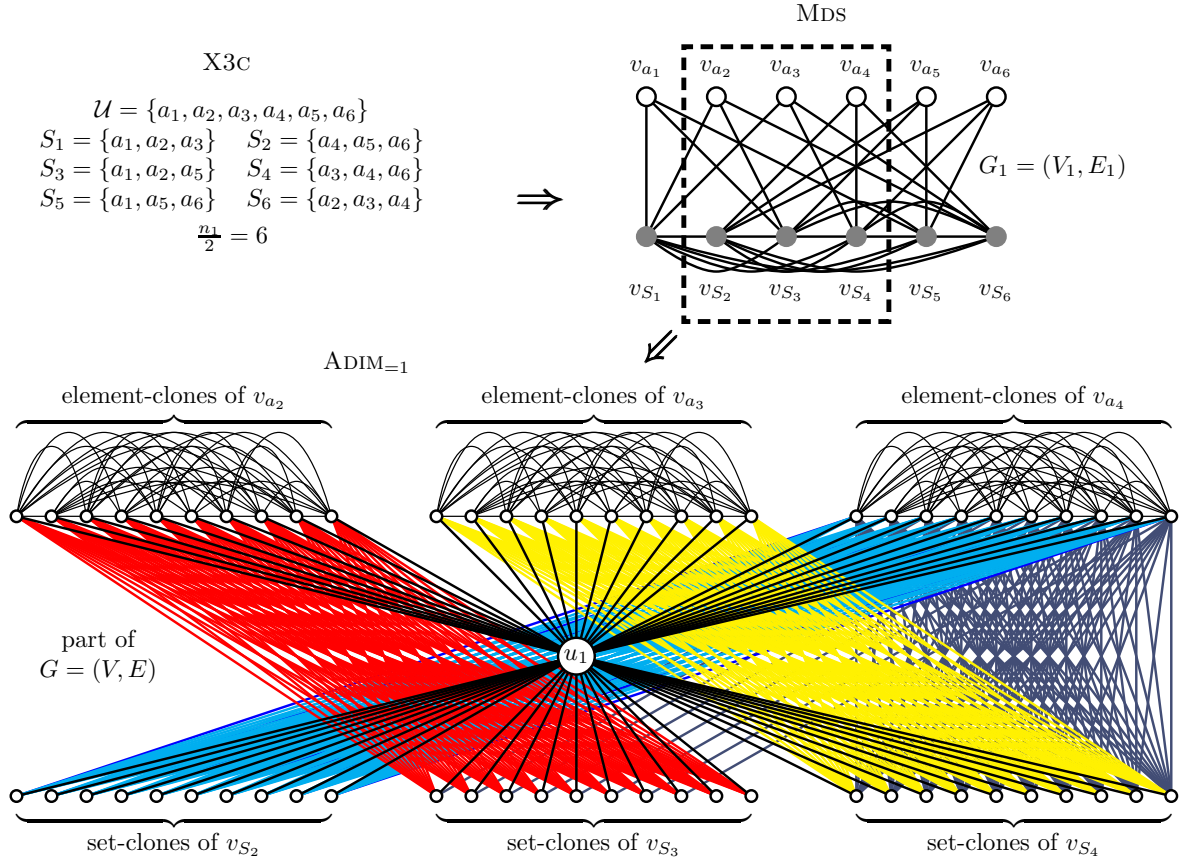}
\caption{\label{fig2}Illustration of the $\NP$-hardness reduction in Theorem~\ref{k-large}(a). Only a part of the 
graph $G$ is shown for visual clarity.}
\end{figure}

We assume without loss of generality that $n_1$ is an even integer, 
and start with an instance of \xtc\ of $\frac{n_1}{2}$ elements 
and transform it to an instance graph $G_1=\left(V_1,E_1\right)$ having $n_1$ nodes of \mds\ via the reduction outlined before.
Since $\frac{n_1}{2}$ is polynomially related to $n$,
such an instance of \xtc\ is $\NP$-complete with respect to $n$ being the input size.
We reduce $G_1$ to an instance $G=(V,E)$ of \eqmad\ in polynomial time as follows
(see \FI{fig2} for an illustration):
\begin{itemize}
\item
We ``clone'' each element node $v_{a_j}\in V_1$ to get 
$2k+\frac{2n_1}{3}$ copies, \IE, 
every node 
$v_{a_j}$
is replaced by 
$2k+\frac{2n_1}{3}$
new nodes
$
v_{a_j,1},
v_{a_j,2},\dots,
v_{a_j,2k+\frac{2n_1}{3}},
$.
We refer to these nodes as \emph{clones} of the element node 
$v_{a_j}$ 
(or, sometimes simply as \emph{element-clone nodes}).
There are precisely
$
n_1 \left( k+\frac{n_1}{3} \right)
$
such nodes.
\item
We ``clone'' each set node $v_{S_j}\in V_1$ to get 
$2k+\frac{2n_1}{3}$ copies, \IE, 
every node 
$v_{S_j}$
is replaced by 
$2k+\frac{2n_1}{3}$
new nodes
$
v_{S_j,1},
v_{S_j,2},\dots,
v_{S_j,2k+\frac{2n_1}{3}},
$.
We refer to these nodes as \emph{clones} of the set node 
$v_{S_j}$
(or, sometimes simply as \emph{set-clone nodes}).
There are precisely
$
n_1 \left( k+\frac{n_1}{3} \right)
$
such nodes.
\item
We add $k$ new nodes 
$u_1,u_2,\dots,u_k$. 
We refer to these nodes as \emph{clique nodes}.
\item
We add an edge between every pair of clique nodes $u_i$ and $u_j$. 
We refer to these edges as \emph{clique edges}. There are precisely $\binom{k}{2}$ such edges.
\item
We add an edge between every clique node and every non-clique node, \IE, we add every edge in the set 
\begin{multline*}
\Big\{ \left\{ u_i,v_{a_j,\ell} \right\} \,|\, 1\leq i\leq k,\, 1\leq j\leq \frac{n_1}{2},\,1\leq\ell\leq 2k+\frac{2n_1}{3} \Big\}
\\
\,\bigcup\,
\Big\{ \left\{ u_i,v_{S_j,\ell} \right\} \,|\, 1\leq i\leq k,\, 1\leq j\leq \frac{n_1}{2},1\leq\ell\leq 2k+\frac{2n_1}{3} \Big\}
\end{multline*}
We refer to these edges as the \emph{partition-fixing} edges. 
There are precisely 
$
kn_1 \left( k+\frac{n_1}{3}\right)
$
such edges.
\item
We add an edge between every pair of distinct element-clone nodes 
$v_{a_j,\ell}$ and 
$v_{a_{j'},{\ell'}}$. 
We refer to these as the \emph{element-clone edges}. 
There are precisely 
$
\binom{ 2k+ ({2n_1})/{3} }{2}
$
such edges.
\item
For every element $a_i$ and every set $S_j$ such that $a_i\notin S_j$, we add 
the following $\left(2k+\frac{2n_1}{3}\right)^2$ edges:
\[
\left\{ v_{S_j,\ell}, v_{a_i,p}\right\} 
\,\,\,\text{for}\,\,\,
1\leq \ell,p \leq 
2k+\frac{2n_1}{3}
\]
We refer to these edges as the \emph{non-member} edges
corresponding to the element node $a_i$ and the set node $S_j$.
There are precisely 
$
\frac{3n_1}{2}\left(2k+\frac{2n_1}{3}\right)^2
$
such edges.
\end{itemize}
Note that $G$ has precisely 
$n_1\left(2k+\frac{2n_1}{3}\right)+k=n$ 
nodes and thus 
our reduction is polynomial time in $n$.
Since any clique node is adjacent to every other node in $G$, it follows that 
$\diam(G)=2$.
We now show the validity of our reduction by showing that 
\[
(\star)\,\,
\nu\left(G_1\right)=\frac{n_1}{3} \,\,\text{if and only if}\,\,
\lopteq\leq\frac{n_1}{3}
\]

\smallskip

\noindent
{\bf Proof of $\nu\left(G_1\right)=\frac{n_1}{3}\,\Rightarrow\, \lopteq\leq\frac{n_1}{3}$}

\medskip

Consider an optimal solution $V_1'\subset \left\{ v_{S_1},v_{S_2},\dots,v_{S_{n_1}}\right\}$ of \mds\ on $G_1$
with 
$\nu\left(G_1\right)=\left|V_1'\right|=\frac{n_1}{3}$.
We now construct a solution $V'\subset V$ of \eqmad\ on $G$ by setting 
$V'=\left\{ v_{S_j,1} \,|\, v_{S_j} \in V_1'\right\}$. 
Note that $|V'|=\left|V_1'\right|=\frac{n_1}{3}$.
We claim that $V'$ is a valid solution of \eqmad\ by showing that 
\begin{description}
\item[(\emph{a})]
$\left\{ u_1,u_2,\dots,u_k\right\}\in\Pi^{=}_{V\setminus V',-V'}$ and 
\item[(\emph{b})]
any other equivalence class in 
$\Pi^{=}_{V\setminus V',-V'}$ has at least $k$ nodes.
\end{description}
To prove (\emph{a}),
consider a clique node $u_i$ and any other non-clique node. Then, the following cases apply:
\begin{itemize}
\item
Suppose that the non-clique node is a element-clone node 
$v_{a_j,\ell}\in V\setminus V'$ 
for some $j$ and $\ell$.
Since $V_1'$ is a solution of \mds\ on $G_1$, there exists a set node $v_{S_p}\in V_1'$ such that 
$\left\{ v_{S_p},v_{a_j}\right\}\in E_1$
and consequently 
$\left\{ v_{S_p,1},v_{a_j,\ell}\right\}\notin E$.
This implies 
that there exists a node 
$v_{S_p,1}\in V'$
such that 
$1=\dist_{u_i,v_{a_j,\ell}}\neq
\dist_{v_{S_p,1},v_{a_j,\ell}}$, 
and therefore 
$v_{a_j,\ell}$ \emph{cannot} be in the same equivalence class with $u_i$.
\item
Suppose that the non-clique node is a 
set-clone node 
$v_{S_j,p}\in V\setminus V'$.
Pick any 
set-clone node $v_{S_\ell,1}\in V'$.
Then, 
$1=\dist_{u_i,v_{S_j,p}}\neq \dist_{v_{S_j,p},v_{S_\ell,1}}$, 
and therefore 
$v_{S_j,p}$ \emph{cannot} be in the same equivalence class with $u_i$.
\end{itemize}
To prove (\emph{b}),
note the following:
\begin{itemize}
\item
Since $\diam(G)=2$,
$\dist_{v_{S_i,p},v_{S_j,q}}=2$ for any two \emph{distinct} set-clone nodes $v_{S_i,p}$ and $v_{S_j,q}$,
and thus all the set nodes in $V\setminus V'$ belong together in the \emph{same} equivalence class in 
$\Pi^{=}_{V\setminus V',-V'}$.
There are at least $n_1 \left( k + \frac{n_1}{3} \right) - \frac{n_1}{3}>k$ such nodes in $V\setminus V'$.
Thus, any equivalence class that contains these set-clone nodes cannot have 
less than $k$ nodes.
\item
Consider now an equivalence class in 
$\Pi^{=}_{V\setminus V',-V'}$
that contains 
a copy 
$v_{a_i,j}$ of the element node $v_{a_i}$ for some $i$ and $j$.
Consider 
another copy 
$v_{a_i,\ell}$ of the element node $v_{a_i}$ for some $\ell\neq j$.
For any set node 
$v_{S_p,1}\in V'$, 
if $a_i\notin S_p$ then 
$
\dist_{v_{S_p,1},v_{a_i,j}}
=
\dist_{v_{S_p,1},v_{a_i,\ell}}
=1
$, 
whereas 
if $a_i\in S_p$ then, 
since $\diam(G)=2$,
it follows that
$
\dist_{v_{S_p,1},v_{a_i,j}}
=
\dist_{v_{S_p,1},v_{a_i,\ell}}
=2
$. 
Thus, any equivalence class that contains at least one clone of an element node must contain 
all the $2k+\frac{2n_1}{3}>k$ clones of that element node and thus 
such an equivalence class cannot have a number of nodes that is less than $k$.
\end{itemize}

\bigskip

\noindent
{\bf Proof of 
$
\lopteq\leq\frac{n_1}{3}
\,\Rightarrow\,
\nu\left(G_1\right)=\frac{n_1}{3}
$
}

\medskip

Since we know that 
$\nu\left(G_1\right)$ is always at least $\frac{n_1}{3}$, 
it suffices to show that
$
\lopteq\leq\frac{n_1}{3}
\,\Rightarrow\,
\nu\left(G_1\right)\leq\frac{n_1}{3}
$.
Consider an optimal solution $\vopteq\subset V$ with $\lopteq\leq\left|\vopteq\right|=\frac{n_1}{3}$. 
Since $\vopteq$ is a solution of \eqmad\ on $G$, there exists a subset of nodes, say 
$\vhat\subset V\setminus\vopteq$, such that $|\vhat|=k$ and 
$\vhat\in 
\Pi^{=}_{V\setminus \vopteq,-\vopteq}
$.

\begin{proposition}\label{no-element-set}
$\vhat$ does not contain any set-clone or element-clone nodes and thus 
$\vhat=\left\{u_1,u_2,\dots,u_k\right\}$.
\end{proposition}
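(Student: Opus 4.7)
The plan is to argue by contradiction that $\vhat$ contains no element-clone or set-clone, after which only clique nodes remain and a direct count finishes the proof.

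The key observation I would establish first is a \emph{clone-indistinguishability} fact: any two distinct clones $v_{a_j,\ell},v_{a_j,\ell'}$ of the same element node have identical adjacencies outside of $\{v_{a_j,\ell},v_{a_j,\ell'}\}$ itself (both are joined to every clique node, to every other element-clone, and to exactly those set-clones $v_{S_p,r}$ with $a_j\notin S_p$); since $\diam(G)=2$, every remaining vertex is at distance exactly $2$ from each. Consequently $\dist_{v_{a_j,\ell},w}=\dist_{v_{a_j,\ell'},w}$ for every $w\notin\{v_{a_j,\ell},v_{a_j,\ell'}\}$. The same verification applies to two clones $v_{S_j,\ell},v_{S_j,\ell'}$ of the same set node. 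In particular, if neither clone in such a pair lies in $\vopteq$, they share the same metric representation with respect to $\vopteq$, so they lie in the same equivalence class of $\Pi^{=}_{V\setminus\vopteq,-\vopteq}$.

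Next I would suppose, for contradiction, that $\vhat$ contains some element-clone $v_{a_j,\ell}$. By the observation above, every clone of $v_{a_j}$ that lies in $V\setminus\vopteq$ must also belong to $\vhat$. Since $v_{a_j}$ has $2k+\frac{2n_1}{3}$ clones in total and $|\vopteq|=\lopteq\leq\frac{n_1}{3}$, at least $\left(2k+\frac{2n_1}{3}\right)-\frac{n_1}{3}=2k+\frac{n_1}{3}>k$ of them survive outside $\vopteq$, forcing $|\vhat|>k$ and contradicting $|\vhat|=k$. The identical counting argument, applied to the $2k+\frac{2n_1}{3}$ clones of any single set node $v_{S_j}$, rules out any set-clone in $\vhat$.

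With all clones excluded, $\vhat$ can only contain clique nodes. Since each clique node is adjacent to every other vertex, any two clique nodes $u_i,u_j\notin\vopteq$ satisfy $\dist_{u_i,w}=1=\dist_{u_j,w}$ for every $w\in\vopteq$, so all clique nodes lying outside $\vopteq$ automatically belong to a single equivalence class. There are only $k$ clique nodes in total, so the equality $|\vhat|=k$ forces every $u_i$ to lie in $V\setminus\vopteq$ and $\vhat=\{u_1,u_2,\dots,u_k\}$. The only mildly delicate point in the whole argument is keeping track of the inequality $2k+\tfrac{2n_1}{3}-\lopteq>k$ that drives both contradictions; everything else is a mechanical consequence of the clone-indistinguishability observation.
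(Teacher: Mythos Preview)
Your proof is correct and follows essentially the same contradiction strategy as the paper: assume $\vhat$ contains a clone, observe that all surviving clones of the same original node must share its equivalence class, and count that at least $2k+\tfrac{2n_1}{3}-\tfrac{n_1}{3}>k$ such clones survive in $V\setminus\vopteq$, contradicting $|\vhat|=k$. Your version is in fact more explicit than the paper's in two respects: you spell out the clone-indistinguishability fact (identical adjacencies outside the pair, hence identical distance vectors since $\diam(G)=2$), and you justify the concluding clause $\vhat=\{u_1,\dots,u_k\}$ by noting that all $k$ clique nodes must lie outside $\vopteq$ to fill a class of size exactly $k$, whereas the paper simply asserts this.
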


\begin{proof}
Suppose that $\vhat$ contains at least one element-clone node $v_{a_i,j}$ for some $i$ and $j$.
But, $V\setminus\vopteq$ contains at least 
$2k+\frac{2n_1}{3}-\frac{n_1}{3}-1>k$ other clones of the element node $a_i$ and all these clones
must belong together with $v_{a_i,j}$ in the \emph{same} equivalence class.
This implies 
$|\vhat|\geq 2k+\frac{2n_1}{3}-\frac{n_1}{3}>k$, a contradiction.

Similarly, suppose that $\vhat$ contains at least one set-clone node $v_{S_i,j}$ for some $i$ and $j$.
But, $V\setminus\vopteq$ contains at least 
$2k+\frac{2n_1}{3}-\frac{n_1}{3}-1>k$ other clones of the set node $S_i$ and all these clones
must belong together with $v_{S_i,j}$ in the \emph{same} equivalence class.
This implies 
$|\vhat|\geq 2k+\frac{2n_1}{3}-\frac{n_1}{3}>k$, a contradiction.
\end{proof}

\begin{proposition}\label{no-set-clone}
$\vopteq$ does not contain two or more clones of the same set node.
\end{proposition}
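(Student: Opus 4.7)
The plan is to argue by contradiction. Suppose $\vopteq$ contains two distinct clones $v_{S_i,p}$ and $v_{S_i,q}$ of the same set node $v_{S_i}$. I would construct the strictly smaller candidate $\vopteq'=\vopteq\setminus\{v_{S_i,q}\}$ and aim to show $\mu\!\left(\cD_{V\setminus\vopteq',-\vopteq'}\right)=k$, which would contradict the optimality of $\vopteq$ since $|\vopteq'|<|\vopteq|=\lopteq$.

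The crucial structural fact I would exploit is that $v_{S_i,p}$ and $v_{S_i,q}$ are \emph{twin} nodes in $G$: by the construction of the reduction they share an identical neighborhood on $V\setminus\{v_{S_i,p},v_{S_i,q}\}$, being both adjacent to every clique node, non-adjacent to every other set-clone, and adjacent to an element-clone $v_{a_j,\ell}$ iff $a_j\notin S_i$. Consequently $\dist_{v,v_{S_i,p}}=\dist_{v,v_{S_i,q}}$ for every $v\in V\setminus\{v_{S_i,p},v_{S_i,q}\}$. Deleting the $v_{S_i,q}$-coordinate from a distance vector $\vecd_{v,-\vopteq}$ therefore merely drops a duplicate of the $v_{S_i,p}$-coordinate, so two nodes $v_1,v_2\in V\setminus\vopteq$ agree on $\vecd_{\cdot,-\vopteq'}$ if and only if they already agreed on $\vecd_{\cdot,-\vopteq}$. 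Hence the restriction of $\Pi^{=}_{V\setminus\vopteq,-\vopteq}$ to $V\setminus\vopteq$ is preserved intact inside $\Pi^{=}_{V\setminus\vopteq',-\vopteq'}$, and the only fresh element whose placement I must analyze is $v_{S_i,q}$ itself.

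I would then argue that $v_{S_i,q}$ joins the equivalence class that already contains the other clones of $v_{S_i}$ lying in $V\setminus\vopteq$; such clones exist in abundance since $|\vopteq|\leq\tfrac{n_1}{3}$ leaves at least $2k+\tfrac{n_1}{3}\geq k+1$ clones of $v_{S_i}$ outside $\vopteq$. Applying the same twin property to $v_{S_i,q}$ and any other clone $v_{S_i,r}\in V\setminus\vopteq$ yields $\vecd_{v_{S_i,q},-\vopteq'}=\vecd_{v_{S_i,r},-\vopteq'}$, so $v_{S_i,q}$ indeed joins that class. Since the class already had size at least $k$ (as the old measure was $k$), it still has size at least $k$ after the insertion, and every other preserved class also has size at least $k$.

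The main obstacle, as I see it, is verifying that the clique-node class $\{u_1,\dots,u_k\}$, which by Proposition~\ref{no-element-set} was the size-$k$ witness in the old partition, does \emph{not} grow in the new partition --- any such growth would push the measure strictly above $k$ and destroy feasibility. I would dispose of this by observing that $v\in V\setminus\vopteq'$ belongs to this class exactly when $\vecd_{v,-\vopteq'}=(1,\dots,1)$; the twin property forces any such $v$ other than $v_{S_i,q}$ to also satisfy $\vecd_{v,-\vopteq}=(1,\dots,1)$ and therefore to have been in $\{u_1,\dots,u_k\}$ already, while $v_{S_i,q}$ itself is excluded since $\dist_{v_{S_i,q},v_{S_i,p}}=2$. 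The clique class thus remains exactly $\{u_1,\dots,u_k\}$ of size $k$, the new measure equals $k$, and the desired contradiction with the optimality of $\lopteq$ follows.
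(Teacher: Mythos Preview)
Your proof is correct and follows the same strategy as the paper: remove one of the two redundant clones from $\vopteq$ and argue that the resulting strictly smaller set is still a valid solution with measure exactly $k$, contradicting optimality. In fact your argument is more careful than the paper's own, which asserts validity of the smaller solution without explicitly checking (as you do via the twin property and the analysis of the clique class) that no equivalence classes merge and that the size-$k$ witness $\{u_1,\dots,u_k\}$ is preserved.
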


\begin{proof}
Suppose that $\vopteq$ contains two set-clone nodes $v_{S_j,p}$ and $v_{S_j,q}$ of the same set node $v_{S_j}$.
But, $V\setminus\vopteq$ contains at least 
$2k+\frac{2n_1}{3}-\frac{n_1}{3}-1>k$ other clones of the element node $a_i$ and all these clones
must belong together in the \emph{same} equivalence class $S$.
If we remove 
$v_{S_j,p}$ from $\vopteq$ then $v_{S_j,p}$ gets added to this equivalence class. 
Thus, such a removal produced another valid solution but with one node less than $\lopt$, contradicting 
the optimality of $\lopteq$.
\end{proof}


\begin{proposition}\label{no-element-clone}
$\vopteq$ does not contain any element-clone node.
\end{proposition}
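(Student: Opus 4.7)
The plan is to argue by contradiction: suppose some element-clone $v_{a_i,j}$ belongs to $\vopteq$ and set $\vopteq' := \vopteq \setminus \{v_{a_i,j}\}$; I will show that $\vopteq'$ is still a valid solution of \eqmad\ with $|\vopteq'| = \lopteq - 1$, contradicting the optimality of $\lopteq$.

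The linchpin first step is to establish a covering property: the set-clones occurring in $\vopteq$, viewed as sets in the underlying \xtc\ instance, cover the entire element universe. Indeed, by Proposition~\ref{no-element-set} the equivalence class $\vhat$ equals exactly $\{u_1,\ldots,u_k\}$ and every node in it has an all-ones metric representation. For any element-clone $v_{a_m,r}\in V\setminus\vopteq$ (and such a clone exists for every element because only $\leq n_1/3$ of the $2k+2n_1/3$ clones of each element can sit in $\vopteq$), the distance to every clique-node and every element-clone in $\vopteq$ equals $1$, while the distance to a set-clone $v_{S_{j_t},p_t}\in\vopteq$ equals $1$ iff $a_m\notin S_{j_t}$. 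So $v_{a_m,r}$ would match the all-ones representation unless some set-clone in $\vopteq$ covers $a_m$, forcing coverage of every element.

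Next I would verify that every equivalence class of $\Pi^{=}_{V\setminus\vopteq',-\vopteq'}$ has size at least $k$, with equality attained only by the clique-class. The clique-class still has exactly $k$ members: the $v_{a_i,j}$-coordinate of any outside node was already $1$ (via element-clone, partition-fixing, or the relevant non-member edge, since $a_i$ must be in some $S_{j_t}\in\vopteq$ by the covering step implies nothing about incoming to $v_{a_i,j}$ itself, but the removed coordinate is $1$ for every outside node), so deletion cannot produce new all-ones vectors; moreover the covering property persists for $\vopteq'$ because $\vopteq'$ retains the same set-clones as $\vopteq$, ruling out element-clones attaining the all-ones vector, and set-clones cannot either since their pairwise distance is $2$ and $\vopteq'$ still contains at least one set-clone. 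Each element-class has size at least $2k+2n_1/3-1>k$ (the $a_i$-class simply gains back $v_{a_i,j}$), and each set-clone class has size at least $2k+2n_1/3-1>k$ using Proposition~\ref{no-set-clone}; removing a coordinate can only coarsen these non-clique classes, never shrink them, so these strict inequalities survive. Hence $\mu(\cD_{V\setminus\vopteq',-\vopteq'})=k$ exactly and $|\vopteq'|<\lopteq$, the contradiction.

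The main subtlety, and the part I expect to need the most care, is the requirement that $\mu(\vopteq')$ equals $k$ rather than exceeds it: the coarsening induced by deleting the $v_{a_i,j}$-coordinate might in principle merge the clique-class with some element-class, pushing $\mu$ above $k$ and invalidating $\vopteq'$ as a solution of \eqmad\ (as opposed to \mad). The covering step above is precisely what rules this out, so the bulk of the technical work in the proof concentrates on handling this single edge case.
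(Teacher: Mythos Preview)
Your approach is correct in outline but differs from the paper's. The paper derives the contradiction by a \emph{counting} argument: once one element-clone sits in $\vopteq$ there are at most $\frac{n_1}{3}-1$ set-clones there, and since each \xtc\ set has only three elements, those set-clones cannot cover the whole universe; an uncovered element then forces some $v_{a_i,p}$ into $\vhat$, contradicting Proposition~\ref{no-element-set}. Your route instead mirrors the proof of Proposition~\ref{no-set-clone}: delete the offending element-clone and argue $\vopteq'$ is still a valid \eqmad\ solution, contradicting minimality of $\lopteq$. Both approaches share the same ``covering'' observation---that the set-clones present in $\vopteq$ must already cover every element---but you exploit it structurally rather than numerically. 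A pleasant byproduct is that your argument never uses the \xtc-specific fact that each set has exactly three elements, so it transfers without change to the general \bSC\ reduction of part~(b).

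Two small points to clean up. First, your parenthetical ``the $v_{a_i,j}$-coordinate of any outside node was already $1$'' is false: any set-clone $v_{S_p,q}$ with $a_i\in S_p$ has that coordinate equal to $2$. This does not damage your proof, because your subsequent sentence (covering persists for $\vopteq'$; a set-clone remains in $\vopteq'$ so no set-clone outside can be all-ones) is the real argument and is correct on its own---just drop the misleading claim. Second, the bound ``each element-class has size at least $2k+2n_1/3-1$'' implicitly assumes at most one clone of each element lies in $\vopteq'$, which you have not established; the safe bound is $2k+\tfrac{2n_1}{3}-|\vopteq'|\geq 2k+\tfrac{n_1}{3}+1>k$, which suffices.
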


\begin{proof}
Suppose that $\vopteq$ contains at least one element-clone node and thus at most $\frac{n_1}{3}-1$ 
set-clone nodes. 
Note that $V\setminus\vopteq$ contains at least 
$2k+\frac{2n_1}{3}-\frac{n_1}{3}$ clones of every element node $a_i$.
Consider an element-clone node 
$v_{a_i,p}\in V\setminus\vopteq$ and a clique node $u_j$.
Since 
$\vhat=\left\{u_1,u_2,\dots,u_k\right\}\in 
\Pi^{=}_{V\setminus \vopteq,-\vopteq}
$, 
there must be a node in $\vopteq$ such that the distance of this node to $u_j$ is different from the distance to
$v_{a_i,p}$. Such a node in $\vopteq$ cannot be an element-clone node, say  
$v_{a_\ell,q}$ since 
$
\dist_{
v_{a_i,p},
v_{a_\ell,q}
}
=
\dist_{
u_j,
v_{a_\ell,q}
}
=1
$.
Since there is an edge between every set-clone node and every clique node, 
such a node must be a set-clone node, say $v_{S_r,s}$ for some $r$ and $s$, such that 
$
\dist_{
v_{a_i,p},
v_{S_r,s}
}
=2
$, \IE, $a_i\in S_r$.
Since every set in \xtc\ contains exactly $3$ elements and $3\times \left(\frac{n_1}{3}-1\right)<n_1$, 
there must then exist an element-clone node $v_{a_i,p}$ such that the distance of $v_{a_i,p}$ to any node in $\vopteq$ is exactly 
the same as the distance of $u_j$ to that node in $\vopteq$. This implies 
$v_{a_i,p}\in\vhat$, contradicting
Proposition~\ref{no-element-set}.
\end{proof}


By Proposition~\ref{no-set-clone}
and 
Proposition~\ref{no-element-clone}, $\vopteq$ contains exactly one clone of a subset of set nodes.
Without loss of generality, assume that 
$
\vopteq = 
\left\{
v_{S_j,1} \,|\, j\in J,\, J\subset \left\{1,2,\dots,\frac{n_1}{2} \right\} \,
\right\}
$
and let 
$
V_1'=\left\{
v_{S_j} \,|\, v_{S_j,1}\in\vopteq \,
\right\}
$.
Note that $\left|V_1'\right|=\left|\vopteq\right|$.
We are now ready to finish our proof by showing $V_1'$ is indeed a valid solution of \mds\ on $G_1$.
Suppose not, and let $v_{a_i}$ be an element-node that is not adjacent to any node in $V_1'$. Then, 
\begin{multline*}
\forall\,v_{S_j}\in V_1'\,:\,\left\{ v_{a_i},v_{S_j} \right\}\notin E_1 
\,\Rightarrow\,
\forall\,v_{S_j,1}\in \vopteq\,:\,\left\{ v_{a_i,1},v_{S_j,1} \right\}\in E 
\\
\,\Rightarrow\,
\forall\,v_{S_j,1}\in \vopteq\,:\,\dist_{v_{a_i,1},v_{S_j,1}}=1
\,\Rightarrow\,
v_{a_i,1}\in\vhat
\end{multline*}
which contradicts Proposition~\ref{no-element-set}.

\medskip
\noindent
{\bf (b)}
The proof is similar to that of (a) 
but this time we start with a general version of \bSC\ as opposed to the restricted 
\xtc\ version, 
and show that the reduction is approximation-preserving in an appropriate sense.
In the sequel, we use the standard notation poly$(n)$ to denote a polynomial $n^c$ of $n$ (for some constant $c>0$).
We recall the following details of the inapproximability reduction of Feige in~\cite{F98}.
Given an instance formula $\phi$ of the standard Boolean satisfiability problem (\sat), 
Feige reduces $\phi$ to an instance $\cU,S_1,S_2,\dots,S_m$ of \bSC\ 
(with $m=\mbox{poly}(n)$)
in $O(n^{\log\log n})$ time such that the following properties are satisfied for any constant $0<\eps<1$:
\begin{itemize}
\item
For some $Q>0$, either 
$\scopt=\frac{n}{Q}$
or
$\scopt> 
\left(\frac{n}{Q}\right)
(1-\eps)
\ln n$.
\item
The reduction satisfies the following completeness and soundness properties:
\begin{description}
\item[\hspace*{0.5in}(completeness)]
If $\phi$ is satisfiable then 
$\scopt=\frac{n}{Q}$.
\item[\hspace*{0.7in}(soundness)]
If $\phi$ is not satisfiable then 
$\scopt>
\left(\frac{n}{Q}\right)
(1-\eps)
\ln n$.
\end{description}
\end{itemize}
%
Since $m=\mbox{poly}(n)$,
by adding duplicate copies of a set, if necessary, we can ensure that $m=n^c-n$ for some constant $c\geq 1$.
Our reduction from \bSC\ to \mds\ to \eqmad\ is same as in~(a) except that some details are different, which we 
show here.
\begin{itemize}[leftmargin=*]
\item
We start with an instance of \bSC\ as given by Feige in~\cite{F98} with $n_1$ elements and $m=(n_1)^c-n_1$ sets, where 
$n_1=
\left( \frac{ -k + \sqrt{k^2 + 2(n-k) }  }   {2} \,\right)^{1/c}
$ 
is a real-valued solution of the equation 
$
(n_1)^{\,2c} + k (n_1)^c - \frac{n-k}{2}=0
$.
Note that since $k\leq n^\eps$ for some constant $\eps<\frac{1}{2}$, 
we have 
$n_1=\Theta\left( {n}^{1/(2\,c)}\right)
$, 
\IE, $n$ and $n_1$ are polynomially related.
\item
We make $2(n_1)^c+2k$ copies of each element node and each set node as opposed to 
$2k+\frac{2n_1}{3}$ copies that we made in 
the proof of (a).
Note that $G$ has again precisely 
$(n_1)^c\left(2k+2(n_1)^c\,\right)+k=n$ 
nodes.
\item
Let $\delta>0$ be the constant given by $\delta=\frac{\ln n}{(1-\eps)\ln n_1}$.
Our claim $(\star)$ in 
the proof of (a)
is now modified to 

\smallskip
%
%
\begin{tabular}{l} 
$(\star)$  
\begin{tabular}{r l} 
(completeness)
&
if
$\nu\left(G_1\right)=\frac{n_1}{Q}$ 
then 
$\lopteq\leq\frac{n_1}{Q}$
\\
[4pt]
(soundness)
&
if 
$\nu\left(G_1\right)
>
\left(\frac{n_1}{Q}\right)
(1-\eps)
\ln n_1$
then 
$\lopteq> 
\left(\frac{n_1}{Q}\right)
(1-\eps)
\ln n_1
=
\left(\frac{n_1}{Q}\right)
\frac{1}{\delta}
\ln n
$
\end{tabular}
\end{tabular}
%
\item
Our proof of the \emph{completeness} claim follows the 
``{Proof of $\nu\left(G_1\right)=\frac{n_1}{3}\,\Rightarrow\, \lopteq\leq\frac{n_1}{3}$}'' in the proof of
(a) 
with the obvious 
replacement of $\frac{n_1}{3}$ by $\frac{n_1}{Q}$.
\item
Note that our soundness claim is equivalent to its contra-positive 
\[
\text{if}\,\,\, 
\lopteq \leq
\left(\frac{n_1}{Q}\right)
(1-\eps)
\ln n_1
\,\,\,\text{then}\,\,\,
\nu\left(G_1\right)
\leq
\left(\frac{n_1}{Q}\right)
(1-\eps)
\ln n_1
\]
and the proof of this contra-positive follows the 
``Proof of 
$
\lopteq\leq\frac{n_1}{3}
\,\Rightarrow\,
\nu\left(G_1\right)=\frac{n_1}{3}
$'' in 
the proof of (a).
In the proof, 
the quantity 
$2k+\frac{2n_1}{3}$ corresponding to the number of copies for each set and element node needs to be replaced by  
$2(n_1)^c+2k$; note that 
$(2(n_1)^c+2k)-n_1\gg k$.
\end{itemize}

\medskip
\noindent
{\bf (c)}
Since $k=n-c$ for some constant $c$, 
$\Pi^{=}_{V\setminus \vopteq,-\vopteq}$
contains a single equivalence class 
$V'\subset V$ such that $|V'|=k$. 
Thus, we can employ the straightforward exhaustive method of 
selecting every possible subset $V'$ of $k$ nodes to be in 
$\Pi^{=}_{V\setminus V',-V'}$
and checking if the chosen subset of nodes provide a valid solution. 
There are $\binom{n}{k}<n^c$ such possible subsets and therefore 
the asymptotic running time is $O\left(n^c+n^3\right)$ which is polynomial in $n$.
Note that for this case $\lopteq=c$ if a solution exists.



\section{Proof of Theorem~\ref{k-one}}
\label{sec-pfe}

\noindent
{\bf (a)}
Note that trivially $\lopteqone\leq n-1$ and thus $\vopteqone\neq\emptyset$.
Our algorithm, shown as Algorithm~V, uses the greedy logarithmic approximation of Johnson~\cite{J74} 
for \bSC\ that selects, at each successive step, a set that contains the maximum number of 
elements that are still not covered.

\smallskip
\begin{longtable}{l l l }
\toprule
\multicolumn{3}{c}{Algorithm V: $O\left(n^3\right)$-time $(1+\ln(n-1)\,)$-approximation algorithm for \eqmadone.}
%
\\
\midrule
%
%
{\bf 1.} & 
\multicolumn{2}{l}{
Compute $\vecd_i$ for all $i=1,2,\dots,n$ in $O\left(n^3\right)$ time using Floyd-Warshall algorithm.
}
\\
[5pt]
{\bf 2.} & 
\multicolumn{2}{l}{
$\hatlopteqone\leftarrow\infty$ ; 
$\hatvopteqone\leftarrow\emptyset$
}
\\
[5pt]
{\bf 3.} & 
\multicolumn{2}{l}{
\ffor\ each node $v_i\in V$ \ddo 
  \hspace*{0.4in} $(*$ we \emph{guess} the set $\left\{ v_i \right\}$ to belong to  $\Pi^{=}_{V\setminus \vopteqone,-\vopteqone}$ $*)$
}
\\
[5pt]
& {\bf 3.1} & 
      create the following instance of \bSC\ containing $n-1$ elements and $n-1$ sets:
\\
& & 
	 \hspace*{0.2in}
   $\cU=\left\{ \,a_{v_j} \,|\, v_j\in V\setminus \left\{v_i\right\} \, \right\}$, 
\\
& & 
	 \hspace*{0.2in}
	 $S_{v_j}= \left\{ a_{v_j} \right\} \cup \left\{ \,a_{v_\ell} \,|\, \dist_{v_i,v_j}\neq\dist_{v_\ell,v_j}\right\}$ 
	        for $j\in \{1,2,\dots,n\}\setminus \{i\}$
\\
[5pt]
& {\bf 3.2} & 
\iif\ $\cup_{j\in \{1,2,\dots,n\}\setminus \{i\}}S_{v_j}=\cU$ \tthen
\\
[5pt]
& & {\bf 3.2.1}  
\hspace*{0.05in}
run the greedy approximation algorithm~\cite{J74} for this instance of \bSC\
\\
& & 
	 \hspace*{0.8in}
	  giving a 
		solution $\cI\subseteq \{1,2,\dots,n\}\setminus\{i\}$ 
\\
[5pt]
& & {\bf 3.2.2}
\hspace*{0.05in}
        $V'=\left\{ \, v_j \,|\, j\in\cI \, \right\}$
\\
[5pt]
& & {\bf 3.2.3}
\hspace*{0.05in}
   \iif\ 
   $\big( \, |V'|<\hatlopteqone \, \big)$ 
   \tthen\
   $\,\,\,\hatlopteqone\leftarrow \left|V'\right|$ ; 
   $\hatvopteqone\leftarrow V'$
\\
[5pt]
{\bf 4.} & 
\multicolumn{2}{l}{
\rreturn\ $\hatlopteqone$ and $\hatvopteqone$ as our solution
}
\\
\bottomrule
\end{longtable}

\begin{lemma}[Proof of correctness]
Algorithm~V returns a valid solution for \eqmadone.
\end{lemma}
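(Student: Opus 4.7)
The plan is to establish correctness by showing that whenever the algorithm updates $\hatvopteqone$ in Step~3.2.3, the resulting set $V'$ makes $\{v_i\}$ a singleton equivalence class in $\Pi^{=}_{V\setminus V',-V'}$, and that at least one iteration of the outer \ffor\ loop reaches this update (so that $\hatvopteqone$ is not the initial $\emptyset$ at termination). Since $\mu(\cD_{V\setminus V',-V'})$ is the size of a \emph{smallest} equivalence class and such a size is always at least $1$, the existence of a singleton class is equivalent to $\mu=1$, i.e., to $V'$ being a valid solution of \eqmadone.

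First I would prove the following bridging claim: for a fixed $v_i\in V$ and any $V'\subseteq V\setminus\{v_i\}$,
\[
\text{$\{v_i\}$ is a singleton class in }\Pi^{=}_{V\setminus V',-V'}
\,\Longleftrightarrow\,
\bigcup_{v_j\in V'} S_{v_j} = \cU.
\]
The forward direction: for any $v_\ell\in V\setminus\{v_i\}$, either $v_\ell\in V'$, in which case $a_{v_\ell}\in S_{v_\ell}$ by construction, or $v_\ell\in V\setminus V'$, in which case $\vecd_{v_i,-V'}\neq \vecd_{v_\ell,-V'}$ forces some coordinate $v_j\in V'$ with $\dist_{v_i,v_j}\neq\dist_{v_\ell,v_j}$, placing $a_{v_\ell}\in S_{v_j}$. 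The reverse direction just reads the same implications backwards. This equivalence is the heart of the reduction to \bSC.

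Next I would use this claim in two ways. On the one hand, whenever Step~3.2 enters its \tthen-branch for some $v_i$, the greedy algorithm returns a subcollection $\{S_{v_j}\}_{j\in \cI}$ covering $\cU$, and by the bridging claim the set $V'=\{v_j\,|\,j\in\cI\}$ makes $\{v_i\}$ a singleton, so $\mu(\cD_{V\setminus V',-V'})=1$. On the other hand, to guarantee that at least one iteration actually updates the solution, fix any optimal $\vopteqone$ (which exists since $\lopteqone\leq n-1$) and choose $v^*\in V\setminus\vopteqone$ such that $\{v^*\}$ is a singleton class in $\Pi^{=}_{V\setminus \vopteqone,-\vopteqone}$. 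During the iteration $v_i=v^*$, the bridging claim yields $\bigcup_{v_j\in\vopteqone}S_{v_j}=\cU$, so the feasibility check in Step~3.2 passes and a valid cover is produced.

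The routine but delicate part is simply keeping track of what ``in the same equivalence class as $v_i$'' means when $v_\ell$ itself may belong to $V'$: this is why each $S_{v_j}$ needs the extra element $a_{v_j}$ beyond the nodes it actually distinguishes from $v_i$. No step looks like a real obstacle; once the bridging equivalence is stated correctly, correctness follows immediately, and the $(1+\ln(n-1))$-approximation claim (not part of this lemma) would follow from the same iteration $v_i=v^*$ giving $\scopt\leq\lopteqone$ combined with the guarantee of Johnson's greedy algorithm on an instance of size $n-1$.
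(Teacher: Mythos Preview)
Your proposal is correct and follows the same underlying idea as the paper: that a cover of $\cU$ by sets indexed by $V'$ forces $\{v_i\}$ to be a singleton class in $\Pi^{=}_{V\setminus V',-V'}$. The paper's proof is terser, showing only the direction ``cover $\Rightarrow$ singleton'' and not addressing whether the algorithm ever updates $\hatvopteqone$ away from its initial value, whereas you explicitly handle both points.

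One small simplification is available for your non-emptiness argument. You invoke an optimal solution $\vopteqone$ to show that the feasibility test in Step~3.2 passes for at least one $v_i$. But in fact the test passes for \emph{every} $v_i$: since $a_{v_j}\in S_{v_j}$ by construction, the union $\bigcup_{j\neq i} S_{v_j}$ trivially contains every $a_{v_j}$ and hence equals $\cU$. So the appeal to $\vopteqone$ is unnecessary for this lemma (though, as you note, it is exactly what is needed for the approximation bound). Your remark about the role of the extra element $a_{v_j}$ in $S_{v_j}$ already contains this observation implicitly.
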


\begin{proof}
%
Suppose that our algorithm returns an invalid solution in 
the iteration of the \ffor\ loop in Step~3 when $v_i$ is equal to $v_\ell$ for some $v_\ell\in V$.
We claim that this cannot be the case since 
$\left\{ v_\ell \right\}\in\Pi^{=}_{V\setminus V',-V'}$.
Indeed, since $\cI$ is a valid solution of the \bSC\ instance, 
for every $j\notin \left\{ \ell \right\} \cup \cI$, the following holds:
\begin{gather*}
\exists \, t\in\cI \,:\, a_{v_j}\in S_{v_t} 
\,\Rightarrow\,
\exists \, v_t\in V' \,:\, 
	 \dist_{v_\ell,v_t}\neq\dist_{v_j,v_t}
\end{gather*}
and thus $v_\ell$ cannot be together with any other node in any equivalence class in 
$\Pi^{=}_{V\setminus V',-V'}$.
\end{proof}

\begin{lemma}[Proof of approximation bound]
Algorithm~V 
solves \eqmadone\ with an approximation ratio of $1+\ln (n-1)$.
\end{lemma}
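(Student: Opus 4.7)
The plan is to reduce the analysis to the greedy \bSC\ approximation by showing that one of the $n$ guesses in Step~3 produces a \bSC\ instance whose optimum is at most $\lopteqone$. First I would exploit the defining property of the optimal solution: since $\mu\!\left(\cD_{V\setminus \vopteqone,-\vopteqone}\right)=1$, the partition $\Pi^{=}_{V\setminus \vopteqone,-\vopteqone}$ contains at least one singleton equivalence class $\{v_\ell\}$ with $v_\ell\in V\setminus \vopteqone$. I would then focus attention on the iteration of the \ffor\ loop in which $v_i=v_\ell$.

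In that iteration, I would argue that $\vopteqone$ itself translates, via the correspondence $v_j\leftrightarrow S_{v_j}$, into a feasible cover of the \bSC\ instance built in Step~3.1. Indeed, for every $v_j\in\vopteqone$ the element $a_{v_j}$ lies in $S_{v_j}$ by construction; and for every $v_j\in V\setminus(\vopteqone\cup\{v_\ell\})$, the fact that $v_j$ and $v_\ell$ lie in different equivalence classes of $\Pi^{=}_{V\setminus \vopteqone,-\vopteqone}$ means $\vecd_{v_j,-\vopteqone}\neq\vecd_{v_\ell,-\vopteqone}$, so some $v_t\in\vopteqone$ satisfies $\dist_{v_\ell,v_t}\neq\dist_{v_j,v_t}$, giving $a_{v_j}\in S_{v_t}$. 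Consequently $\{S_{v_j}:v_j\in\vopteqone\}$ is a feasible \bSC\ solution, so the optimum of this \bSC\ instance is at most $|\vopteqone|=\lopteqone$.

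Finally I would invoke Johnson's guarantee: the greedy algorithm returns an $\cI$ of size at most $(1+\ln|\cU|)\scopt\leq (1+\ln(n-1))\,\lopteqone$. By the preceding correctness lemma, the corresponding $V'$ is a valid solution for \eqmadone, so in this iteration $\hatlopteqone\leq|V'|\leq(1+\ln(n-1))\,\lopteqone$; the minimization in Step~3.2.3 only preserves this bound. The only subtle point, which I regard as the main (and essentially only) obstacle, is asserting that the \bSC\ instance for the \emph{specific} guess $v_i=v_\ell$ is feasible with cover $\vopteqone$; this is handled precisely by selecting $v_\ell$ from a singleton class guaranteed to exist by $\mu=1$. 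The running time analysis is straightforward: Step~1 costs $O(n^3)$, each of the $n$ iterations builds an instance with $n-1$ elements and $n-1$ sets in $O(n^2)$ time and runs greedy \bSC\ in $O\!\left(\sum_j|S_{v_j}|\right)=O(n^2)$, for an overall $O(n^3)$ bound.
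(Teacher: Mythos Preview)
Your proposal is correct and follows essentially the same approach as the paper: pick $v_\ell$ from a singleton class of $\Pi^{=}_{V\setminus \vopteqone,-\vopteqone}$, show that $\{S_{v_j}:v_j\in\vopteqone\}$ covers $\cU$ in the corresponding \bSC\ instance (handling $v_j\in\vopteqone$ and $v_j\notin\vopteqone$ exactly as you do), and then invoke Johnson's $(1+\ln(n-1))$ bound. The running-time paragraph you appended belongs to the separate time-complexity lemma rather than this one, but it is correct as well.
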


\begin{proof}
Fix any optimal solution $\vopteqone$.
Since $\mu \left(  \cD_{V\setminus \vopteqone,-\vopteqone} \right)=1$,  
$\left\{ v_\ell \right\}\in\Pi^{=}_{V\setminus \vopteqone,-\vopteqone}$
for some $v_\ell\in V$.
Consider the iteration of the \ffor\ loop in Step~3 when $v_i$ is equal to $v_\ell$.
We now analyze the run of \emph{this particular iteration}, and claim that the set-cover instance 
created during this iteration satisfies 
$\scopt\leq \left|\vopteqone\right|=\lopteqone$.
To see this, construct the following solution of the set-cover instance from $\vopt$ containing 
exactly $\lopt$ sets:
\[
v_i\in\vopteqone \,\,\equiv\,\,i\in\cI 
\]
To see that this is indeed a valid solution of the set-cover instance, consider any 
$a_{v_j}\in \cU=\{a_{v_1},a_{v_2},\dots,a_{v_n}\}\setminus \{a_{v_\ell}\}$.
Then, the following cases apply showing that $a_{v_j}$ belongs to some set selected in our
solution of \bSC:
\begin{itemize}
\item
if $j\in\cI$ then 
$a_{v_j}\in S_{v_j}$ and $S_{v_j}$ is a selected set in the solution.
\item
if $j\notin\cI$ then 
$
v_j\in V\setminus\vopt
\,\Rightarrow\,
\exists\,v_t\in\vopt \,:\,
    \dist_{v_\ell,v_t}\neq\dist_{v_j,v_t}
\,\Rightarrow\,
\exists\,t\in\cI \,:\,
a_{v_j}\in S_{v_t}
$.
\end{itemize}
Using the approximation bound of the algorithm of~\cite{J74} it now follows that the quality of our solution
$\hatlopteqone$ 
satisfies 
\[
\hatlopteqone
=
\left| \hatvopteqone \right| 
=|\cI|
<
(1+\ln (n-1)\,)\scopt
\leq
(1+\ln (n-1)\,)\lopteqone
\]
\end{proof}

\begin{lemma}[Proof of time complexity]
Algorithm~V runs in 
$O\left(n^3\right)$ time.
\end{lemma}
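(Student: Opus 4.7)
The plan is to account for the running time of each of the three phases of Algorithm~V (the Floyd-Warshall preprocessing, the outer \ffor\ loop over candidate nodes $v_i$, and the construction/solution of the set-cover instance inside each iteration) and show that the dominant cost is $O(n^3)$.

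First, I would observe that Step~1 runs Floyd-Warshall in $O(n^3)$ time, which after completion makes every distance $\dist_{v_a,v_b}$ available in $O(1)$ lookup time. This is a one-time cost and already matches the claimed bound, so everything inside the \ffor\ loop in Step~3 must be amortized to $O(n^2)$ per outer iteration.

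Next I would analyze one iteration of Step~3 for a fixed $v_i$. Constructing the \bSC\ instance in Step~3.1 can be done in $O(n^2)$ time by iterating, for each $j\in\{1,\dots,n\}\setminus\{i\}$, over all $v_\ell\neq v_i$ and adding $a_{v_\ell}$ to $S_{v_j}$ exactly when the precomputed distances $\dist_{v_i,v_j}$ and $\dist_{v_\ell,v_j}$ differ. This simultaneously establishes that $\sum_j |S_{v_j}| = O(n^2)$, which is the crucial bound. The feasibility check in Step~3.2 is then a single $O(n^2)$ scan. For Step~3.2.1, I would invoke the implementation of Johnson's greedy algorithm mentioned in Section~\ref{sec-termi} that runs in $O\bigl(\sum_{j}|S_{v_j}|\bigr) = O(n^2)$ time. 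Steps~3.2.2 and~3.2.3 are clearly $O(n)$.

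Putting these together, each outer iteration costs $O(n^2)$, and there are $n$ outer iterations, giving $O(n^3)$ for Step~3. Combined with the $O(n^3)$ preprocessing, the total running time is $O(n^3)$, as claimed. The only subtlety I expect is ensuring that the off-the-shelf $O\bigl(\sum_i |S_i|\bigr)$ implementation of the greedy \bSC\ algorithm is applicable here; this is immediate from the fact that our sets are given explicitly as lists of indices over a universe of size $n-1$, so no extra machinery is needed.
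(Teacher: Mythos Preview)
Your proof is correct and follows essentially the same decomposition as the paper: $O(n^3)$ for Floyd--Warshall, then $O(n^2)$ per outer iteration to construct and greedily solve the $(n-1)$-element, $(n-1)$-set \bSC\ instance. The only minor difference is that you invoke the $O\bigl(\sum_j |S_{v_j}|\bigr)$ implementation of the greedy heuristic, whereas the paper simply bounds the greedy step by $O(n^2)$ directly from the instance size; both arrive at the same per-iteration bound.
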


\begin{proof}
There are a total of $n$ instances of set cover that we need to build in Step~3.1 and solve by the greedy heuristic
in Step~3.2.1.
Building the set-cover instance can be trivially done in $O\left(n^2\right)$ time by comparing $\dist_{v_i,v_j}$ for 
all appropriate pairs of nodes $v_i$ and $v_j$.
Since the set-cover instance in Step~3.1 has $n-1$ sets each having no more than $n-1$ elements, each implementation 
of the greedy heuristic in Step~3.2.1 takes $O\left(n^2\right)$ time.
\end{proof}

\smallskip

\noindent
{\bf (b)}
Let $v_i$ be the node of degree $1$.
Let $v_\ell$ be the unique node adjacent to $v_i$ (\IE, $\left\{v_i,v_\ell\right\}\in E$). 
Consider the following solution of \eqonemad: $V'=\left\{v_i\right\}$.
We claim that is a valid solution of \eqonemad\ by showing that 
$\left\{v_{\ell}\right\}\in\Pi^{=}_{V\setminus V',-V'}$.
Consider any node $v_j\in V\setminus \left\{v_i,v_\ell \right\}$,
Then, $1=\dist_{v_\ell,v_i}\neq\dist_{v_j,v_i}$.

%
\begin{figure}[ht]
\centering
\includegraphics[width=\linewidth]{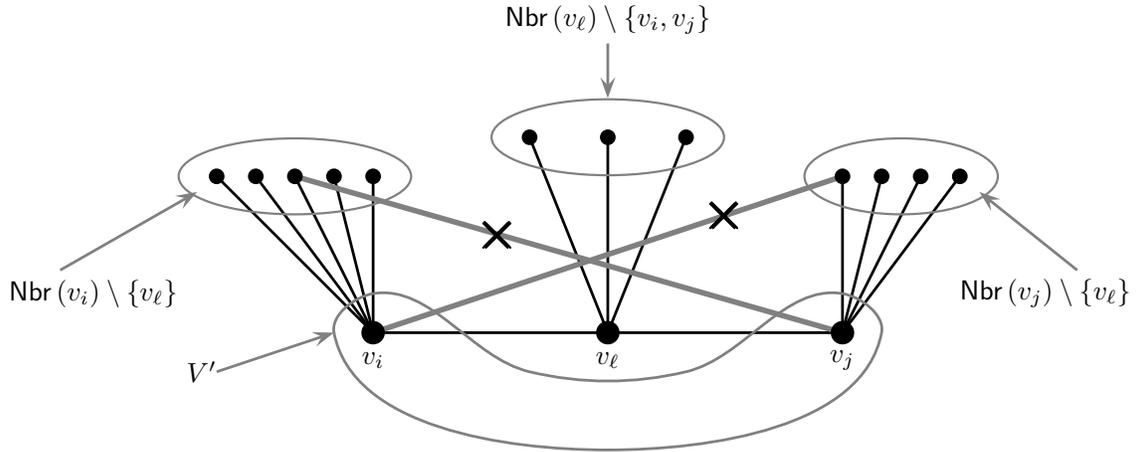}
%
%
\caption{\label{fig1}Illustration of the proof of Theorem~\ref{k-one}(c). Edges marked by 
$\displaystyle\pmb{\times}$ cannot exist. No node in 
$\nbr\left(v_\ell\right)\setminus \left\{v_i,v_j\right\}$
can have an edge to {\em both} $v_i$ and $v_j$.}
\end{figure}

\bigskip

\noindent
{\bf (c)}
Since $G$ does not contain a $4$-cycle, $\diam(G)\geq 2$. 
Thus, 
there exists two nodes $v_i,v_j\in V$ such that $\dist_{v_i,v_j}=2$. 
Let 
$
v_{\ell}
$
be a node at a distance of $1$ from both $v_i$ and $v_j$
on a shortest path between $v_i$ and $v_j$ (see~\FI{fig1}).
Consider the following solution of \eqonemad: $V'=\left\{v_{i},v_{j} \right\}$. 
Note that $v_{\ell}\in V\setminus V'$.
We claim that this is a valid solution of \eqonemad\ by showing that 
$\left\{v_{\ell}\right\}\in\Pi^{=}_{V\setminus V',-V'}$ (\IE,  
no node $v_p \in V \setminus \left\{v_i,v_j,v_\ell\right\}$ can belong together with $v_\ell$ 
in the same equivalence class of 
$\Pi^{=}_{V\setminus V',-V'}$)
in the following manner:
\begin{itemize}
\item
If 
$v_p\in\nbr\left(v_i\right)\setminus \left\{v_\ell\right\}$
then
$\dist_{v_\ell,v_j}=1$
but 
$\dist_{v_p,v_j}\neq 1$
since $G$ has no $4$-cycle (see the edges marked $\displaystyle\pmb{\times}$ in \FI{fig1}).
\item
If 
$v_p\in\nbr\left(v_j\right)\setminus \left\{v_\ell\right\}$
then
$\dist_{v_\ell,v_i}=1$
but 
$\dist_{v_p,v_i}\neq 1$
since $G$ has no $4$-cycle (see the edges marked $\displaystyle\pmb{\times}$ in \FI{fig1}).
\item
If 
$v_p\in\nbr\left(v_\ell\right)\setminus \left\{v_i,v_j\right\}$
then 
$v_p$ cannot be adjacent to \emph{both} $v_i$ and $v_j$ since $G$ does not contain a $4$-cycle.
This implies that 
$\dist_{v_\ell,v_i}=\dist_{v_\ell,v_j}=1$
but at least one of 
$\dist_{v_p,v_i}$ and 
$\dist_{v_p,v_j}$ is not equal to $1$.
\item
If $v_p$ is any node not covered by the above cases, then 
$\dist_{v_p,v_i}>1$
but 
$\dist_{v_\ell,v_i}=1$.
\end{itemize}


\section{Concluding Remarks}
\label{sec-conc}

Prior to our work, 
known results for these privacy measures 
only included some heuristic algorithms with no provable guarantee on performances such as in~\cite{Trujillo-1},
or algorithms for very special cases. In fact, it was not even known if any version of these related computational problems is 
$\NP$-hard. Our work 
provides the first non-trivial computational complexity results for effective computation of these measures.
Theorem~\ref{thm1} shows that both \nokmad\ and \mad\ are \emph{provably} computationally easier problems than 
\eqmad.
In contrast, 
Theorem~\ref{k-large}(a)--(b) and Theorem~\ref{k-one}
show that 
\eqmad\ is in general computationally hard but admits approximations or exact solution for specific choices of $k$ or 
graph topology.
We believe that our results will stimulate further research on quantifying and computing privacy measures for networks.
In particular, our results raise the following interesting research questions:
\begin{enumerate}[label=$\blacktriangleright$]
\item
We have only provided a logarithmic approximation algorithm for \eqmadone. Is it possible to design a non-trivial 
approximation algorithm for \eqmad\ for $k>1$ ? We conjecture that a $O(\log n)$-approximation is possible for \eqmad\
for every fixed $k$.
\item
We have provided a logarithmic inapproximability result for \eqmad\ for every $k$ \emph{roughly} up to $\sqrt{n}$.
Can this approximability result be further improved when $k$ is not a constant ?  
We conjecture that the inapproximability factor can be further improved to $\Omega\left(n^\eps\right)$ for some 
constant $0<\eps<1$ when $k$ is around $\sqrt{n}$.
\end{enumerate}

%


\newpage

\end{document}